\newcommand{\SHORT}[1]{}
\newcommand{\LONG}[1]{#1}
\newtheorem{theorem}{Theorem}[section]
\newtheorem{lemma}[theorem]{Lemma}
\newtheorem{corollary}[theorem]{Corollary}
\newcommand{\eps}{\varepsilon}
\newcommand{\e}{\eps}
\newcommand{\TT}{T}  
\newcommand{\IGNORE}[1]{}
\renewcommand{\log}{\lg}
\newcommand{\Patrascu}{P\v{a}tra\c{s}cu}
\newcommand{\Ex}{\mathbb{E}}
\newenvironment{description*}%
  {\vspace{-1ex}\begin{description}%
    \setlength{\itemsep}{-0.5ex}%
    \setlength{\parsep}{0pt}}%
  {\end{description}}
\newenvironment{itemize*}%
  {\vspace{-1ex}\begin{itemize}%
    \setlength{\itemsep}{-0.5ex}%
    \setlength{\parsep}{0pt}}%
  {\end{itemize}}
\newenvironment{enumerate*}%
  {\vspace{-1ex}\begin{enumerate}%
    \setlength{\itemsep}{-0.5ex}%
    \setlength{\parsep}{0pt}}%
  {\end{enumerate}}
\newcommand{\twodots}{\mathinner{\ldotp\ldotp}}
 \gdef\xxxmark{%
   \expandafter\ifx\csname @mpargs\endcsname\relax 
     \expandafter\ifx\csname @captype\endcsname\relax 
       \marginpar{xxx}
     \else
       xxx 
     \fi
   \else
     xxx 
   \fi}
 \gdef\xxx{\@ifnextchar[\xxx@lab\xxx@nolab}
 \long\gdef\xxx@lab[#1]#2{{\bf [\xxxmark #2 ---{\sc #1}]}}
 \long\gdef\xxx@nolab#1{{\bf [\xxxmark #1]}}
\begin{document}

\title{Orthogonal Range Searching on the RAM, Revisited}
\author{Timothy M. Chan\thanks{This author's work was supported by an
    NSERC grant. School of Computer Science,
             University of Waterloo, tmchan@uwaterloo.ca}
   \and Kasper Green Larsen\thanks{MADALGO, Aarhus University,
     larsen@cs.au.dk.  This author's work was supported in part by
     MADALGO---Center for Massive Data Algorithmics, a Center of the
     Danish National Research Foundation---and in part by a Google 
     Europe Fellowship in Search and Information Retrieval. }
   \and Mihai P\v{a}tra\c{s}cu\thanks{AT\&T Labs, mip@alum.mit.edu}}

\maketitle

\begin{abstract}
We present a number of new results on one of the most extensively
studied topics in computational geometry, orthogonal range searching.
All our results are in the standard word RAM model:
\begin{enumerate}
\item We present two data structures for 2-d orthogonal
  range emptiness. The first achieves $O(n\lg\lg n)$ space and
  $O(\lg\lg n)$ query time, assuming that the $n$ given points are in
  rank space.
This improves the previous results by Alstrup, Brodal, and Rauhe
(FOCS'00), with $O(n\lg^\eps n)$ space and $O(\lg\lg n)$ query time,
or with $O(n\lg\lg n)$ space and $O(\lg^2\lg n)$ query time. Our
second data structure uses $O(n)$ space and answers queries in
$O(\lg^\eps n)$ time. The best previous $O(n)$-space data structure,
due to Nekrich (WADS'07),
answers queries in $O(\lg n/\lg \lg n)$ time.
\item We give a data structure for 3-d orthogonal range reporting
with $O(n\lg^{1+\eps} n)$ space and $O(\lg\lg n + k)$ query time
for points in rank space, for any constant $\eps>0$.
This improves the previous results by Afshani (ESA'08),
Karpinski and Nekrich (COCOON'09), and Chan (SODA'11),
with $O(n\lg^3 n)$ space and $O(\lg\lg n + k)$ query time, or
with $O(n\lg^{1+\eps}n)$ space and $O(\lg^2\lg n + k)$ query time.
Consequently, we obtain improved upper bounds 
for orthogonal range reporting in all constant dimensions above~3.

Our approach also leads to a new data structure for 2-d orthogonal range 
minimum queries with $O(n\lg^\eps n)$ space and $O(\lg\lg n)$ query time
for points in rank space.
\item We give a randomized algorithm for 4-d \emph{offline} dominance range
reporting/emptiness with running time $O(n\log n)$ plus the output size.
This resolves two open problems (both appeared 
in Preparata and Shamos' seminal book):
\begin{enumerate}
\item given a set of $n$ axis-aligned rectangles in the plane,
we can report all $k$ enclosure pairs (i.e., pairs $(r_1,r_2)$ where
rectangle $r_1$ completely encloses rectangle $r_2$)
in $O(n\lg n + k)$ expected time;
\item given a set of $n$ points in 4-d, we can find all maximal points
(points not dominated by any other points)
in $O(n\lg n)$ expected time.
\end{enumerate}
The most recent previous development on (a) was reported back in SoCG'95 by
Gupta, Janardan, Smid, and Dasgupta, whose main result was
an $O([n\lg n + k]\lg\lg n)$ algorithm.  The best previous
result on (b) was an $O(n\lg n\lg\lg n)$ algorithm
due to Gabow, Bentley, and Tarjan---from STOC'84!
As a consequence, we also obtain the current-record time bound for
the maxima problem in all constant dimensions above~4.
\end{enumerate}
\end{abstract}

\SHORT{
\thispagestyle{empty}
\newpage
\setcounter{page}{1}
}

\section{Introduction}
We revisit one of the most fundamental and well-studied classes of problems in
computational geometry, orthogonal range searching. The goal of these
problems is to preprocess a set of $n$ input points in
$d$-dimensional space such that one can efficiently aggregate
information about the points contained in an axis-aligned query
rectangle or box. The most typical types of information computed include
counting the number of points, computing their semigroup or group sum,
determining emptiness, and reporting the points in the query
range. These problems have been
studied extensively for more than three decades, yet many
questions have remained unresolved.
See
e.g.~\cite{arge:indexandrange,indexmodel,subramanian:p-range,afshani:dominance,firstattempt,Brodal00h,McCreight,Bentley.80,Chazelle.functional,Chazelle.filtering.search,Chazelle.Guibas.fractional.I,Chazelle.LB.reporting,Chazelle.Guibas.fractional.II,Chazelle.LB.II,chazelle:offlinerangelb,Lue,Nekrich.SOCG07,Makris.IPL98,Jaja.ISAAC04,Agarwal.Erickson.survey98,Agarwal.survey04,Fredman.LB.semigroup,Willard.LB,patrascu08structures,Nekrich.COCOON}
for just a fraction of the vast amount of publications on orthogonal
range searching.

Recent papers~\cite{AAL,AAL2} have made progress on the pointer machine model and I/O model.
In this paper, we study orthogonal range searching in the standard word RAM model, which
is arguably the most natural and realistic model of computation to consider in internal memory.
We obtain the best RAM upper bounds known to date for a number of
problems, including: 2-d orthogonal range emptiness,
3-d orthogonal range reporting, and offline 4-d dominance range reporting. 

\subsection{Range Searching Data Structures}

In what follows, when stating data structure results, we assume
that all input point sets are in \emph{rank space}, i.e., they have
coordinates on the integer grid $[n]^d=\{0,\dots,n-1\}^d$.
This assumption is for convenience only: in a $w$-bit word RAM 
when all coordinates are in $[U]^d$ with $U=2^w$, we can always reduce to
the rank-space ($U=n$) case by adding to the query time bound a term 
proportional to the cost of predecessor search~\cite{mihai_pred},
which is e.g.~$O(\lg\lg U)$ by van Emde Boas trees~\cite{vEB}
or $O(\log_w n)$ by fusion trees~\cite{FreWil}.
After rank space reduction, all the algorithms mentioned use only
RAM operations on integers of $O(\lg n)$ bits. (The predecessor lower
bound holds even for range emptiness in 2-d, so the additive
predecessor cost in the upper bound is optimal.)

\paragraph{Range reporting in 2-d.}
The most basic version of orthogonal range searching is perhaps range
reporting in 2-d (finding all points
inside a query range).
Textbook description of range trees~\cite{PreShaBOOK} implies a solution with
$O(n\lg n)$ space and $O(\lg n+k)$ query time, where $k$ denotes the output size of the query
(i.e., the number of points reported). Surprisingly, the best
space--query bound for this basic problem is still open.
Chazelle~\cite{Chazelle.functional} gave an $O(n)$-space data
structure with $O(\lg n+ k\lg^\eps n)$ query time, which has been reduced slightly
by Nekrich~\cite{Nekrich:linear} to $O(\lg n/ \lg \lg n+ k\lg^\eps n)$.
(Throughout the paper, $\e>0$
denotes an arbitrarily small constant.)  
Overmars~\cite{overmars} gave a method with $O(n\lg n)$ space and
$O(\lg \lg n + k)$ query time. This query time is optimal
for $O(n\lg^{O(1)}n)$-space structures in the cell probe model
(even for range emptiness in the rank-space case), by reduction from colored predecessor
search~\cite{mihai_pred}.  Alstrup, Brodal and Rauhe~\cite{Brodal00h}
presented two solutions, one achieving $O(n\lg^\e n)$ space
and optimal $O(\lg \lg n+k)$ query time, and one with $O(n \lg \lg n)$
space and $O(\lg^2\lg n +k\lg\lg n)$ query time, both improving Chazelle's
earlier data structures~\cite{Chazelle.functional} with the
corresponding space bounds.  In Section~\ref{sec:2d}, we
present two new solutions.
Our first solution achieves $O(n\lg\lg n)$ space and $O((1+k)\lg\lg n)$
query time, thus strictly improving Alstrup et al.'s second
structure.  Secondly, we present an $O(n)$-space data structure with
query time $O((1+k)\lg^\e n)$, significantly improving the first
term of Nekrich's result.

We can also solve range emptiness in 2-d (testing whether 
a query rectangle contains any input point) by setting $k=0$.
Here, our results are the most attractive, improving on 
all previous results.  For example, our method
with $O(n\lg\lg n)$ space has optimal $O(\lg\lg n)$ query time,
and simultaneously improves both of Alstrup et al.'s solutions
($O(n\lg^\e n)$ space and $O(\lg \lg n)$ time, or $O(n \lg \lg n)$
space and $O(\lg^2\lg n)$ time).

\IGNORE{
\paragraph{Range emptiness in 2-d.}
The most basic version of orthogonal range searching is perhaps range
emptiness in 2-d (testing whether a query rectangle contains any input
point).  Textbook description of range trees implies a solution with
$O(n\lg n)$ space and $O(\lg n)$ query time. Surprisingly, the best
space--query bound for this basic problem is still open.
Chazelle~\cite{Chazelle.functional} gave an $O(n)$-space data
structure with $O(\lg n)$ query time, which has been reduced slightly
by Nekrich~\cite{Nekrich:linear} to $O(\lg n/ \lg \lg n)$ (these
results apply more generally to range counting).
Overmars~\cite{overmars} gave a method with $O(n\lg n)$ space and
$O(\lg \lg n)$ time for emptiness queries. This query time is optimal
for $O(n\lg^{O(1)}n)$-space structures in the cell probe model, even
in the rank-space case, by reduction from colored predecessor
search~\cite{mihai_pred}.  Alstrup, Brodal and Rauhe~\cite{Brodal00h}
presented two improved solutions, one achieving $O(n\lg^\e n)$ space
and optimal $O(\lg \lg n)$ query time, and one with $O(n \lg \lg n)$
space and $O(\lg^2\lg n)$ query time.  (Throughout the paper, $\e>0$
denotes an arbitrarily small constant.)  In Section~\ref{sec:2d}, we
present two new solutions, improving on all previous results. Our
first solution achieves $O(n\lg\lg n)$ space and optimal $O(\lg\lg n)$
query time, thus simultaneously improving both of Alstrup et al.'s
results. Secondly, we present an $O(n)$-space data structure with
query time $O(\lg^\e n)$, improving the result of Nekrich by almost a
$\lg n$ factor.

\paragraph{Range reporting in 2-d.}
All these 2-d orthogonal range emptiness data structures can be
adapted to solve 2-d orthogonal range reporting (finding all points
inside a query range).  Nekrich's $O(n)$-space data
structure~\cite{Nekrich:linear}, which improves over
Chazelle's~\cite{Chazelle.functional}, has query time $O(\lg n/\lg\lg
n + k\lg^\eps n)$, where $k$ denotes the output size of the query
(i.e., the number of points reported).  Alstrup et al.'s first data
structure has $O(n\lg^\e n)$ space and $O(\lg \lg n + k)$ query time,
and their second data structure has $O(n \lg \lg n)$ space and
$O(\lg^2\lg n + k\lg\lg n)$ query time, both improving Chazelle's
earlier data structures~\cite{Chazelle.functional} with the
corresponding space bounds.  Our $O(n\lg\lg n)$-space data structure
has $O((1+k)\lg\lg n)$ query time, improving Alstrup et al.'s second
structure. Our second result gives an $O(n)$-space data structure with
query time $O((1+k)\lg^\e n)$, significantly improving the additive
term of Nekrich's result.
}

\paragraph{Range reporting in 3-d.}
By a standard reduction, 
Alstrup et al.'s first 2-d result directly implies a data structure
for 3-d orthogonal range reporting with space $O(n \lg^{1+\e}n)$ and
query time $O(\lg n+k)$; this improved an already long chain of previous work.
Nekrich~\cite{Nekrich.SOCG07} was the first to
achieve sublogarithmic query time for 3-d orthogonal range
reporting: his data structure has $O(n \lg^4 n)$ space and
$O(\lg^2 \lg n+k)$ query time. Afshani~\cite{afshani:dominance} subsequently
improved the space to $O(n \lg^3 n)$ while maintaining the same $O(\lg^2 \lg n+k)$ query
time. Karpinski and Nekrich~\cite{Nekrich.COCOON} later reduced the
space to $O(n \lg^{1+\e}n)$ at the cost of increasing the query time
to $O(\lg^3\lg n+k)$, by borrowing ideas
of Alstrup et al.~\cite{Brodal00h}.
In these methods by Afshani~\cite{afshani:dominance}
and Karpinski and Nekrich~\cite{Nekrich.COCOON}, two of the $\lg\lg n$
factors come from orthogonal planar point location.  By using
the most recent result on orthogonal point location by Chan~\cite{chan_pps}, 
one of the $\lg\lg n$ factors can automatically be eliminated in all
of these time bounds.
This still leaves the query time of Karpinski and Nekrich's
structure at $O(\lg^2\lg n+k)$, however.  
In Section~\ref{sec:3d}, we present a new method with
$O(n\lg^{1+\e}n)$ space and optimal $O(\lg\lg n +k)$ query time, simultaneously
improving all previous methods that have linear dependence in $k$.%
\footnote{
As Karpinski and Nekrich~\cite{Nekrich.COCOON} observed,
space can be slightly reduced to $O(n\lg n\lg^{O(1)}\lg n)$
if one is willing to give up linear dependence in $k$, with
query time $O(\lg^2\lg n + k\lg\lg n)$.
}

\paragraph{Range reporting in higher dimensions.}
By a standard reduction, the previous 3-d results
\cite{afshani:dominance,Nekrich.COCOON,chan_pps} imply data structures
for $d$-dimensional orthogonal range reporting with $O(n\lg^d n)$
space and $O(\lg^{d-3}n \lg\lg n + k)$ query time, or
$O(n\lg^{d-2+\e}n)$ space and $O((\lg n/\lg\lg n)^{d-3}\lg^2\lg n +
k)$ query time for $d\ge 4$.  Our result implies a $d$-dimensional
data structure with $O(n\lg^{d-2+\e}n)$ space and $O((\lg n/\lg\lg
n)^{d-3}\lg\lg n + k)$ query time.  This query bound is the best known
among all data structures with $O(n\lg^{O(1)}n)$ space; our space
bound is the best known 
among all data structures with
$O(\lg^{O(1)}n+k)$ query time.

The 4-d case is especially nice, as we get $O(n\lg^{2+\e}n)$ space and
$O(\lg n + k)$ query time.  This query time almost matches \Patrascu's
$\Omega(\lg n/\lg\lg n)$ lower bound~\cite{patrascu08structures} 
for $O(n\lg^{O(1)}n)$-space structures in the cell probe model
for 4-d emptiness.

\paragraph{Range minimum in 2-d.}
Our 3-d range reporting method can also be modified to give a
new result for the 2-d range minimum query problem (see the appendix),
with $O(n\lg^\eps n)$ space and $O(\lg\lg n)$ query time.
\IGNORE{
Our 3-d range reporting method can also be modified to solve the
2-d range minimum query problem (finding the point inside the
query rectangle with the minimum priority, assuming that
each input point is given a priority value).
Our data structure achieves
$O(n\lg^\eps n)$ space and optimal $O(\lg\lg n)$ query time.
(In contrast, modifying Karpinski and Nekrich's 
3-d range emptiness method~\cite{Nekrich.COCOON}
gives $O(n\lg^{O(1)}\lg n)$ space but $O(\lg^2\lg n)$ query time.)
}

\subsection{Offline Range Searching}
Finally, in Section~\ref{sec:offline}, we turn to {\em offline\/} (or
{\em batched\/}) versions of
orthogonal range searching where all queries are given in advance;
the goal is to minimize the total time needed to answer all queries,
including preprocessing.  Offline problems are important,
as efficient algorithms are often obtained through
the use of efficient data structures in offline settings.
Offline problems also raise new challenges, beyond
simply the issue that preprocessing times sometimes get ignored in
analysis of data structures in the literature. Interestingly, the
complexity of offline problems may be fundamentally different from their
online counterparts: examples include predecessor search 
(where the offline problem is related to integer sorting and can be
solved in $O(\sqrt{\lg\lg n})$ expected time per query~\cite{HanTho}),
orthogonal 2-d range counting (where recently Chan and \Patrascu~\cite{ChaPatSODA10}
have obtained an offline $O(\sqrt{\lg n})$ bound
per query, better than the online $O(\lg n/\lg\lg n)$ bound), and
nonorthogonal 2-d point location (where Chan and \Patrascu~\cite{ChaPatSTOC07}
have obtained an offline $2^{O(\sqrt{\lg\lg n})}$ bound, better than
the current online $O(\lg n/\lg\lg n)$ or
$O(\sqrt{\lg U/\lg\lg U})$ bound~\cite{ChaPatFOCS06}).

\paragraph{Offline dominance reporting in 4-d and the rectangle enclosure
problem in 2-d.}  Our main result on offline range searching is a new
algorithm for the offline 4-d dominance reporting problem: given $n$
input points and $n$ query points, report for each query point $q$ all
input points that are dominated by $q$.  Here, $p=(x_1,\ldots,x_d)$ is
{\em dominated\/} by $q=(a_1,\ldots,a_d)$ iff $x_i\le a_i$ for every
$i$, i.e., $p$ lies inside the $d$-sided range
$(-\infty,a_1]\times\cdots\times (-\infty,a_d]$ (an \emph{orthant}).
  In other words, given $n$ red points and $n$ blue points, we want to
  report all pairs $(p,q)$ where the red point $p$ is dominated by the
  blue point $q$.  We give a randomized algorithm that solves this
  problem in $O(n\lg n + k)$ expected time in 4-d, where $k$ denotes
  the total output size. 

(Note that the best known online data structure for 4-d
dominance reporting with $O(\lg n + k)$ query time requires
$O(n\lg^{1+\eps}n)$ space and preprocessing time at least 
as big, and thus is not applicable here.)

In the literature, offline 4-d dominance reporting is studied
under the guise of the 2-d {\em rectangle enclosure\/} problem: given
$n$ rectangles in 2-d, report all pairs $(r_1,r_2)$ where rectangle
$r_1$ completely encloses rectangle $r_2$.  By mapping each rectangle to
a point in 4-d, it is easy to see that the problem reduces to
offline 4-d dominance reporting (in fact, it is equivalent to
 dominance reporting in the ``monochromatic'' case, where we equate the query point set
with the input point set).  This classical problem 
has the distinction
of being the last problem covered in Preparata and Shamos' standard
textbook~\cite{PreShaBOOK}.
In the early 1980s, Vaishnavi and Wood~\cite{VaiWoo}
and Lee and Preparata~\cite{LeePre} both gave $O(n\lg^2 n + k)$-time
algorithms.  The main result of a SoCG'95 paper by
Gupta et al.~\cite{GupSCG95} was an 
$O([n\lg n + k]\lg\lg n)$-time algorithm.
An alternative algorithm
by Lagogiannis et al.~\cite{LaMaTs} obtained the same time bound.
A number of researchers (the earliest seems to be
Bentley and Wood~\cite{BenWoo}) raised
the question of finding an $O(n\lg n +k)$-time algorithm.
Particularly frustrating is the fact that obtaining $O(n\lg n+k)$ time
is easy for the similar-sounding
{\em rectangle intersection\/} problem (reporting all pairs of
intersecting rectangles).
Our new randomized algorithm shows that the 2-d rectangle enclosure
problem can be solved in $O(n\lg n + k)$ time, finally resolving
a 3-decades-old question.

\IGNORE{

Vaishnavi, Wood (1980)
O(n lg^2 n + k) time, O(n lg^2 n)? or O(nlg n)? space
V. Vaishnavi and D. Wood, 
Data structures for the rectangle containment and enclosure problems. Computer 
Graphics and Image Processing 13 (1980), pp. 372–384

Lee, Preparata
O(n lg^2 n + k), O(n) space

papers that mention it as open problem:
J. L. Bentley  and D. Wood.
An optimal worst-case algorithm for reporting intersections of rectangles.
IEEE Trans. Computers, 29:571--577, 1980.
V. Bistiolas, D. Sofotassios, and A. Tsakalidis.
Computing rectangle enclosures. CGTA, 2:303--308, 1993.

another paper with O(n lg nlglg n + klglg n), O(n) space:
G. Lagogiannis, C. Makris, and A. Tsakalidis.
A new algorithm for rectangle enclosure reporting.
IPL, 72:177--182, 1999
}

By a standard reduction, our result implies a randomized $O(n\lg^{d-3}n+k)$-time
algorithm for offline dominance reporting for any constant dimension
$d\ge 4$.

\paragraph{Offline dominance emptiness and the maxima problem.}
Our algorithm can also solve the offline dominance emptiness problem
in $O(n\lg^{d-3} n)$ expected time for any constant $d\ge 4$: 
here, given $n$ input points and $n$ query points, we want to
decide for each query point $q$ whether some input point is dominated by $q$.

A notable application is the {\em maxima\/} problem: given $n$ points,
identify all maximal points, i.e., points that are dominated by no
other point.  Like its cousin, the convex hull problem, this problem
plays a fundamental role in computational geometry and is often used as
examples to illustrate basic algorithmic techniques.  It has many
applications and is related to concepts from other fields
(e.g., skyline queries in databases and Pareto optimality in economics).
The earliest result for dimensions $d\ge 3$
was Kung, Luccio, and Preparata's 
$O(n\lg^{d-2}n)$-time algorithm~\cite{KuLuPr,PreShaBOOK} from 1975.
While progress has been made on probabilistic
results for random point sets~\cite{BeClLe,ClaFOCS94,Gol94}, 
output-sensitive results~\cite{ClaFOCS94,KirSeiSCG85}, and 
even instance-optimal results~\cite{AfBaCh}, 
the best worst-case result for the maxima problem
has remained the one from
Gabow, Bentley, and Tarjan's classic STOC'84 paper~\cite{GaBeTa}.
(That paper is well remembered for introducing Cartesian trees.)
Gabow et al.'s time bound is $O(n\lg^{d-3} n\lg\lg n)$ for $d\ge 4$.
Our (randomized) result implies the first improvement in two and a half decades:
$O(n\lg^{d-3}n)$.  In particular, we obtain the first $O(n\lg n)$
algorithm for the 4-d maxima problem.

\paragraph{Other applications.}
Our offline dominance result also leads to the current
best results for other standard problems (see the appendix), such as bichromatic $L_\infty$-closest pair
and $L_\infty$-minimum spanning tree for $d\ge 4$.

\paragraph{Organization.}
In the three subsequent sections, we describe our new methods for 2-d
orthogonal range reporting, 3-d orthogonal range reporting, and
offline 4-d dominance range reporting.  These sections are independent
of each other and can be read separately.  Interestingly, our
techniques for 2-d range reporting are not based on Alstrup et al.'s
previous grid-based approach~\cite{Brodal00h} but draw on new ideas
related to succinct data structures. Our 3-d range reporting structure
is based on Alstrup et al.'s approach, but with new twists.  Finally,
our 4-d offline algorithm involves an unusual (and highly nonobvious)
mixture of bit-packing techniques~\cite{ChaPatSODA10} and classical
computational geometric tools (Clarkson--Shor-style random sampling).

\IGNORE{

We revisit one of the most fundamental and well-studied problems in
computational geometry, orthogonal range searching. Orthogonal range
searching is the problem of preprocessing a set of $n$ points in
$d$-dimensional space, such that one can efficiently aggregate
information about the points contained in an axis-aligned query
rectangle. The most typical types of information computed range from
counting the number of points, computing their semigroup or group sum,
determining emptiness or reporting the points in the query
rectangle. All these variants of orthogonal range searching have been
studied extensively for more than two decades, yet the complexity of
these problems remains at large unresolved in almost all models of
computation. See
e.g.~\cite{arge:indexandrange,indexmodel,subramanian:p-range,afshani:dominance,firstattempt,Brodal00h,McCreight,Bentley.80,Chazelle.functional,Chazelle.filtering.search,Chazelle.Guibas.fractional.I,Chazelle.LB.reporting,Chazelle.Guibas.fractional.II,Chazelle.LB.II,chazelle:offlinerangelb,Lue,Nekrich.SOCG07,Makris.IPL98,Jaja.ISAAC04,Agarwal.Erickson.survey98,Agarwal.survey04,Fredman.LB.semigroup,Willard.LB,patrascu08structures,Nekrich.COCOON,AAL,AAL2}
for just a fraction of the vast amount of publications on orthogonal
range searching.

In this paper, we improve on the best previous results for a number of
variants of orthogonal range searching: 2d orthogonal range emptiness,
3d orthogonal range reporting and offline 4d dominance reporting. All
our results are in the standard word-RAM model of computation, with
word size $w=\Theta(\lg n)$.

\subsection{Range Searching Data Structures}
In this section, we review the previous results on orthogonal range
searching data structures that are most relevant to our work, that is,
static data structures for orthogonal range reporting and emptiness in
the word-RAM. Following that, we present our new results for 2d
orthogonal range emptiness and 3d orthogonal range reporting.

\paragraph{Previous Results.}
Orthogonal range reporting when points have coordinates in
$\mathbb{R}^2$ can be solved with $O(n \lg^{\e}n)$ space and $O(\lg
n+k)$ query time~\cite{Chazelle.functional}, where $k$ denotes the
output size of a query and $\e >0$ is an arbitrary constant. Clearly
the query time is optimal when coordinates can only be
compared. However, when points are given in rank space, i.e. they have
coordinates on the integer grid $[n]^d=\{0,\dots,n-1\}^d$, far more
efficient solutions exist. For the remainder of this paper, we assume
that all input point sets are in rank space.

Overmars~\cite{overmars} showed that 2-d orthogonal range reporting can
be solved in $O(\lg \lg n+k)$ time and with $O(n \lg n)$ space. This
query time is optimal by reduction from predecessor
search~\cite{mihai_pred}. Alstrup, Brodal and Rauhe~\cite{Brodal00h}
later improved the space to $O(n \lg^\e n)$ while maintaining optimal
$O(\lg \lg n + k)$ query time. By a standard reduction, this gives a
data structure for 3d orthogonal range reporting with query time
$O(\lg n+k)$ and space $O(n \lg^{1+\e}n)$. Nekrich was the first to
achieve double logarithmic query time for 3d orthogonal range
reporting. In ~\cite{Nekrich.SOCG07} he gave a data structure with
$O(\lg^2 \lg n+k)$ query time and $O(n \lg^4 n)$ space. Afshani then
improved the space to $O(n \lg^3 n)$ while maintaining the same query
time~\cite{afshani:dominance}. Karpinski and Nekrich later reduced the
space to $O(n \lg^{1+\e}n)$ at the cost of increasing the query time
to $O(\lg^3\lg n+k)$~\cite{Nekrich.COCOON}, thus matching the best
previous space bound of Alstrup, Brodal and Rauhe. Finally, Chan
managed to reduce the query time by a $\lg \lg n$ factor for both of
the best previous tradeoffs, thus achieving either optimal $O(\lg \lg
n+k)$ query time and $O(n \lg^3 n)$ space, or $O(\lg^2 \lg n+k)$ query
time and $O(n \lg^{1+\e}n)$ space~\cite{chan_pps}.

There are several best previous tradeoffs for 2d orthogonal range
emptiness. Nekrich gave a linear space data structure with $O(\lg n/
\lg \lg n)$ query time~\cite{Nekrich:linear}. The best data structures
with double logarithmic query time are due to Alstrup, Brodal and
Rauhe~\cite{Brodal00h}. They present two data structure, one achieving
optimal $O(\lg \lg n)$ query time and with $O(n\lg^\e n)$ space, and
one with $O(\lg^2\lg n)$ query time and $O(n \lg \lg n)$ space.

\paragraph{Our Results.}

\subsection{Offline Range Searching}

}

\IGNORE{

2D emptiness:
  n, lg n/lglg n
  n lglg n, lg^2lg n
  n lg^eps n, lglg n
  new: n lglg n, lglg n

[Chazelle: n, lg n + k lg^eps n  
 (Nekrich: n, lg n/lglg n + k lg^eps n)
Alstrup et al.: n lglg n, lg^2lg n + k lglg n
                n lg^eps n, lglg n + k
new: n lglg n, (1+k)lglg n]

3D
Afshani+Chan:           n lg^3 n, lglg n + k
Karpinski,Nekrich+Chan: n lg^{1+eps}n, lg^2lg n + k
                        [n lg n lg^3lg n, lg^2 lg n + klglg n]
new: n lg^{1+eps}n, lglg n + k

2D range min:

higher-d:
Karpinski,Nekrich: nlg^{d-2+eps}n, (lg n/lglg n)^{d-3}lg^2lg n + k
                 [n lg^{d-2}n lg^3lg n, lg^{d-3} n lg^2lg n + klglg n]
new: n lg^{d-2+eps}n, (lg n/lglg n)^{d-3}lglg n + k

}

\section{Range Reporting in 2-d}\label{sec:2d}

\newcommand{\occ}{k}

The goal of this section is to prove:
\begin{theorem}  \label{thm:2d}
For any $2 \le B \le \lg^\eps n$, we can solve 2-d orthogonal range
reporting in rank space with: $O(n \cdot B \lg\lg n)$ space and 
$O(\lg\lg n + k \cdot \log_B \lg n)$ query time; or $O(n \cdot \log_B
\lg n)$ space and $(1+k) \cdot O(B \lg\lg n)$ query time.
\end{theorem}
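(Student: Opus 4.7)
The plan is to design a single $B$-parameterized structure and read off both tradeoffs as two operating points of the parameter. The backbone is an outer range tree $T$ on the $x$-coordinates whose branching pattern is chosen in van Emde Boas style so that its height $L$ can be made to equal $O(\lg\lg n)$ for the first bound, or $O(\log_B \lg n)$ for the second (the latter obtained by taking a $B$-th-power root of the $x$-universe at each level, rather than a square root). At each node $v$ of $T$, associated with a contiguous $x$-range $R_v$, I would store a succinct structure over the $y$-coordinates of the points in $v$'s subtree, supporting $y$-successor search and forward $y$-enumeration. The per-node representation is to take $O(B)$ words per point for the first bound and $O(1)$ per point for the second, giving total space $O(nB\lg\lg n)$ and $O(n\log_B \lg n)$ respectively.

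A query $[x_1,x_2]\times[y_1,y_2]$ is answered in two stages. First, locate $x_1$ and $x_2$ in $T$ and decompose $[x_1,x_2]$ into $O(B\cdot L)$ canonical subtrees hanging off the two root-to-leaf paths. Second, in each canonical subtree perform a $y$-successor query and walk forward, reporting every point whose $y$-value lies in $[y_1,y_2]$ until exiting the range. The per-output cost will be $O(\log_B \lg n)$ for the first bound---one ``ball-inheritance'' step per level of $T$, realized by constant-time rank/select on the succinct per-node arrays---and $O(B\lg\lg n)$ for the second, where fewer levels are compensated by more work per level. The leading $O(\lg\lg n)$ additive term in the first bound is \emph{not} obtained by visiting the $O(L)$ canonical subtrees separately (which would cost $\Omega(L\cdot\lg\lg n)$), but by a single global vEB-style predecessor query that returns in one shot the topmost canonical subtree whose $y$-range intersects $[y_1,y_2]$.

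The hard part, I expect, is in the succinct per-node structures: they must share a common ball-inheritance skeleton so that a point's identity can be propagated from a high canonical node down to a leaf of $T$ in $O(\log_B \lg n)$ rank/select steps, while the per-point overhead summed over all $L$ levels still fits in the claimed $O(B\lg\lg n)$ or $O(\log_B \lg n)$ word budget. Striking that balance---getting both a fast emptiness probe and cheap point-by-point reporting out of the same succinct layering---is presumably what the introduction alludes to as the ``new ideas related to succinct data structures,'' and is the step I would attack last, after having nailed down the outer range tree and the canonical decomposition.
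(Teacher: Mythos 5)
There is a genuine gap in the outer structure. You propose a van Emde Boas–style tree over the $x$-universe with height $L=O(\lg\lg n)$ (or $O(\log_B\lg n)$), and then claim the $x$-range $[x_1,x_2]$ decomposes into $O(B\cdot L)$ canonical subtrees. But a tree of height $\lg\lg n$ over $n$ leaves has node degree $n^{1/\lg\lg n}$, so the canonical decomposition of an arbitrary $x$-interval yields on the order of $n^{1/\lg\lg n}\cdot\lg\lg n$ canonical nodes, not $O(B\cdot L)$; the claimed query bound cannot be salvaged this way. The $B$-parameter in the theorem has nothing to do with the arity or the height of the outer tree.

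The paper's construction keeps a \emph{perfect binary} tree of height $\lg n$ over the $x$-axis, and never touches all $O(\lg n)$ canonical nodes. Instead, the query jumps straight to $\mathrm{LCA}(x_1,x_2)$, converts $[y_1,y_2]$ into local $y$-ranks there by one succinct predecessor search (this is the sole source of the additive $O(\lg\lg n)$), and then answers the two residual 3-sided subproblems inside the left and right child by a \emph{recursive range-minimum} procedure: a succinct RMQ structure on the $y$-ranked list (keys $=x$-coordinates) returns the extremal point in a $y$-interval; that point's true coordinates are recovered via one ball-inheritance query; if it lies inside the query it is reported and the $y$-interval is split in two and recursed on, otherwise the branch terminates. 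This yields exactly $O(1)+O(k)$ ball-inheritance queries, and the $B$-tradeoff lives entirely inside the ball-inheritance structure: skip pointers between levels that are multiples of $B^i$, stored succinctly via a generalized rank query, give either $O(nB\lg\lg n)$ space with $O(\log_B\lg n)$ traversal or $O(n\log_B\lg n)$ space with $O(B\lg\lg n)$ traversal, over all $\lg n$ levels of the binary tree. Your instinct that ball inheritance over a succinct per-node skeleton is the crux is right, but the flattened-tree-plus-canonical-decomposition outer layer would break the query bound before you ever got to that crux; you need the LCA-plus-RMQ mechanism (and a full-height binary tree over which the ball inheritance actually has something nontrivial to do) to make the additive term $O(\lg\lg n)$ and the per-output cost exactly one ball-inheritance traversal.
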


At the inflection point of the two trade-offs, we get a data structure
with space $O(n \lg\lg n)$ and query time $(1+k) \cdot O(\lg\lg
n)$. At one extreme point, we have space $O(n)$ and query time $(1+k)
\cdot O(\lg^\eps n)$. At the other extreme, we have space $O(n
\lg^\eps n)$ and query time $O(\lg\lg n + k)$, thus matching the
bounds of Alstrup et al. Note that all these tradeoffs also apply to
emptiness with $k$ set to $0$. This can be seen through a black-box
reduction: Assume a reporting data structure with query time $t_1+t_2
k$ is available. Given an emptiness query, run the query algorithm on
the reporting data structure using the same query. If the query
algorithm terminates within $t_1$ computation steps, we immediately
get the answer, otherwise we terminate after $t_1+1$ operations, at
which point we know $k>0$ and thus we know the range is nonempty.

We will describe a linear-space reduction from 2-d orthogonal range
reporting to the following ``ball inheritance'' problem, a
pointer-chasing-type problem reminiscent of fractional
cascading~\cite{Chazelle.Guibas.fractional.I}. Consider a perfect binary tree with $n$ leaves. Also
consider $n$ labelled balls, which appear in an ordered list at the
root of the tree. We imagine distributing the balls from the root down
to the leaves, in $\log n$ steps. In the $i$-th step, a node on level
$\lg n-i$ contains a subset of the balls in an ordered list, where the
order is the same as the original order at the root. Each ball chooses
one of the two children of the node and is ``inherited'' by that
child. The number of balls in each node across a level is the
same. That is, on level $i$, each node contains exactly $2^i$ balls,
and each leaf contains exactly one ball.

Given the inheritance data described above, the goal is to build a
data structure that answers the following type of query: given a node
and an index into its list of balls, what leaf does the indicated ball
eventually reach? We may imagine each ball as having $\lg n$ copies,
one at each node on its root-to-leaf path. Conceptually, each ball
stores a pointer to its copy on the level below. The identity of a
ball on level $i$ consists of a node at level $i$, and the index of
the ball in the node's list. The goal is, given (the identity of) a
ball on some level, to traverse the pointers down to the tail of the
list, and report the leaf's identity.

In Section~\ref{sec:2d-ptr}, we give space/time trade-offs for the
problem with results parallel to Theorem~\ref{thm:2d}. These
trade-offs come out naturally given the definition of ball
inheritance: our data structure mimicks skip lists on $n$ independent
lists with the copies of each ball. In Section~\ref{sec:2d-red}, we
give a reduction from 2-dimensional range reporting to this abstract
ball-inheritance problem.

\subsection{Solving the Ball-Inheritance Problem} \label{sec:2d-ptr}

This subsection will prove:
\begin{lemma}
\label{lem:ballinherit}
For any $2 \le B \le \lg^\eps n$, we can solve the ball-inheritance
problem with: (1) space $O(n B\lg\lg n)$ and query time $O(\log_B \lg
n)$; or (2) space $O(n \log_B \lg n)$ and query time $O(B \lg\lg
n)$. 
\end{lemma}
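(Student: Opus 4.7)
My plan is to design a skip-list-like structure that layers shortcut pointers on top of a natural succinct base structure. At every node $v$ of the binary tree, store a bitvector $B_v$ of length equal to the size of $v$'s ball list, with $B_v[i]$ recording whether the $i$-th ball in $v$'s list descends to the left or right child. Augmenting each $B_v$ with a constant-time rank/select index lets us traverse one level per $O(1)$ time via a single rank query: if $B_v[i]=0$ then the ball enters the left child at index equal to the number of 0s among $B_v[0\twodots i-1]$, and symmetrically for 1. Since the bits over one tree level sum to exactly $n$, the whole bitvector collection across all $\lg n$ levels occupies $O(n)$ words, giving the base structure $O(n)$ space and $O(\lg n)$ query time.

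To obtain the two claimed trade-offs, I would sample a subset of the $\lg n$ levels and, for each ball at each sampled level, store a \emph{shortcut} that directly yields the ball's identity (ultimately, its leaf) at a much deeper sampled level. For trade-off~(1) I plan a hierarchy of $\log_B \lg n$ skip layers, where layer $j$ samples every $B^j$-th level and stores, for each ball present at a layer-$j$ sample, a pointer to the ball's copy at the next layer-$j$ sample $B^j$ levels below. A query starting at a $(\text{node}, \text{index})$ pair at level $\ell$ would walk at most $O(1)$ base-structure steps to align with the coarsest applicable shortcut, then descend through successively finer layers, arriving at the leaf after $O(\log_B \lg n)$ constant-time jumps. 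For trade-off~(2) I would use only $\log_B \lg n$ globally sampled levels at a single granularity and, between consecutive samples, let the query walk $O(B \lg\lg n)$ base-structure steps, optionally accelerated by precomputed lookup tables on blocks of $B$ consecutive levels that fit in a word and can be processed by word-level parallelism.

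The execution order would be: (i) set up and verify the base bitvector structure and its rank-based navigation; (ii) define the sampling pattern and shortcut representation; (iii) present the query algorithm and prove that it returns the correct leaf; (iv) account for the space across all layers and all balls to certify the two trade-offs.

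The step I expect to be hardest is obtaining the $\lg\lg n$ factor in the space bound rather than a $\lg n$ factor. A naive shortcut that stores the full $\Theta(\lg n)$-bit target identity at every sample would inflate the space to $O(nB \lg n)$ for trade-off~(1) and would fail similarly for trade-off~(2). Attaining the claimed bounds seems to require shortcuts stored in an amortized $O(\lg\lg n)$ bits each, e.g., by representing each target $(\text{node}, \text{index})$ relative to the identity recovered from the previous, coarser-layer jump, so that only the incremental information lying within a small local window needs to be written. Designing this relative encoding (balancing it against the word-packed lookup tables used in trade-off~(2)) and carrying out the induction over the $\log_B \lg n$ layers is the main technical obstacle.
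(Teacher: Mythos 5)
Your overall architecture (a rank-based base structure at each tree level plus skip-list shortcuts built on top) matches the paper's, and you have correctly isolated the crux: a naive shortcut storing a $\Theta(\lg n)$-bit $(\text{node},\text{index})$ pair per sampled level destroys the space bound, and you need something like ``$O(\Delta)$ amortized bits for a $\Delta$-level jump.'' But the proposal stops at identifying this as ``the main technical obstacle''; it does not supply the mechanism, and the hint you give (relative encoding against the coarser-layer jump) is not the one that works. The paper's resolution is a \emph{generalized succinct rank} structure: an array $A[1\twodots n]$ over an alphabet $\Sigma$ can be stored in $O(n\lg|\Sigma|)$ bits so that $\mathrm{rank}(k)=|\{i\le k : A[i]=A[k]\}|$ is computable in $O(1)$ time. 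A pointer skipping $\Delta$ levels is then encoded by treating the $2^\Delta$ possible descendant nodes as the alphabet: each ball's symbol (which descendant it lands in) costs $\Delta$ bits, and the ball's index in the destination list is exactly the number of earlier balls at this level with the same symbol, i.e., one rank query. This is what makes the jump cost $O(\Delta)$ bits per ball rather than $O(\lg n)$; without it, neither trade-off's space bound follows.

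There is also a concrete error in your trade-off~(2). You propose sampling only $\log_B \lg n$ levels ``at a single granularity'' and walking base-structure steps between consecutive samples. With $\log_B \lg n$ samples spread over $\lg n$ levels, the gap between consecutive samples is $\Theta(\lg n / \log_B \lg n)$, so the walk alone takes $\Theta(\lg n / \log_B \lg n)$ time, which far exceeds the target $O(B\lg\lg n)$ unless $B$ is nearly $\lg n / \lg\lg n$. The paper instead uses a multi-layer scheme symmetric to trade-off~(1): define the layer of a tree level by the number of trailing zeros base $B$; a level at layer $i$ stores a pointer skipping $B^i$ levels (costing $O(B^i)$ bits per ball), so each layer contributes $O(\lg n)$ bits per ball and the total is $O(\lg n \cdot \log_B\lg n)$ bits per ball. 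The query uses at most $B$ pointers per layer before reaching a higher layer, giving $O(B\log_B\lg n)=O(B\lg\lg n)$ time. Your ``lookup tables on blocks of $B$ consecutive levels'' does not substitute for this, since a word holds only $O(w/\lg n)$ level-transitions, not $B$ of them, and in any case the problem is the number of levels to traverse, not the per-level cost.
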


Using standard techniques, one can represent the pointers on each
level of the tree with $O(n)$ bits such that we can traverse a pointer
in constant time. This uses the rank problem from succinct data
structures: represent a bit vector $A[1\twodots n]$ using $O(n)$ bits,
to answer $\mathrm{rank}(k) = \sum_{i \le k} A[i]$.  In fact solutions
with very close to $n$ bits of space and constant query time are known
\cite{patrascu08succinct}.  For every node, we store a solution to the
rank problem among its balls, where ``0'' denotes a ball going to the
left child, and ``1'' a ball going to the right child. The index of a
ball in the right child is $\mathrm{rank}(i)$ evaluated at the
parent. The index of ball $i$ in the left child is $i -
\mathrm{rank}(i)$. 

This trivial data structure uses $O(n\lg n)$ bits in total, or $O(n)$
words, but has $O(\lg n)$ query time. For faster queries, the query
will need to skip many levels at once. We use an easy generalization
of rank queries, which we prove in the appendix:
\begin{lemma}   \label{lem:succinct}
Consider an array $A[1\twodots n]$ with elements from some alphabet
$\Sigma$. We can construct a data structure of $O(n\lg \Sigma)$ bits
which answers in constant time $\mathrm{rank}(k) = $ the number of
elements in $A[1 \twodots k]$ equal to $A[k]$.
\end{lemma}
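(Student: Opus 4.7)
The plan is to adapt the classical two-level $O(n)$-bit rank structure for bit vectors to an alphabet of size $\sigma=|\Sigma|$, with word-level parallelism replacing the small-block table lookup. Write $w=\Theta(\log n)$ for the word size. I first reduce to the regime $\sigma=O(\log n/\log\log n)$: when $\sigma$ is polynomially large in $n$, $O(n\log\sigma)=\Omega(n\log n)$ bits are already available, and I can afford to precompute and store the answer at every position directly.

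For small $\sigma$, I will build two layers of counters on top of a packed copy of $A$ that uses $\lceil\log\sigma\rceil$ bits per element. First, partition $A$ into super-blocks of size $S=\Theta(\sigma\log n/\log\sigma)$ and, at each super-block boundary, store a table of $\sigma$ global prefix counts, one per symbol, using $O(\log n)$ bits each; this adds $(n/S)\,\sigma\log n=O(n\log\sigma)$ bits. Further partition each super-block into blocks of size $B=\Theta(w/\log\sigma)$, so that a block fits into one machine word, and at each block boundary store $\sigma$ counters giving within-super-block counts, each in $O(\log S)=O(\log\log n)$ bits, for a total of $(n/B)\,\sigma\log\log n=O(n\log\sigma)$ bits---this is precisely where the restriction $\sigma=O(\log n/\log\log n)$ is used.

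To answer a query at position $k$ with $c=A[k]$, I will sum three contributions, each obtained in $O(1)$ time: the super-block prefix count of $c$, the block count of $c$ relative to the enclosing super-block, and the count of $c$ inside the current block at positions $\le k$. The last is the only non-lookup step: broadcast $c$ into every field of a word, XOR with the block word so that matching positions become zero, apply the standard ``subtract-$1$, AND with complement, AND with high-bit mask'' field-zero detection trick to produce a bitmap of matches, mask out positions beyond $k$, and call popcount.

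The main obstacle will be making this zero-field detection behave correctly for a general field width $\log\sigma$: the classical trick needs either a sentinel bit per field or a careful carry analysis to prevent leakage between adjacent fields, but either fix costs only a constant factor in space and preserves the $O(n\log\sigma)$ bound. The other delicate point is the parameter balancing---the choice of $S$ and $B$ is forced by simultaneously requiring the two counter tables to fit within $O(n\log\sigma)$ bits and requiring a block to fit in one machine word, and that balance is exactly what pins down the valid range of $\sigma$.
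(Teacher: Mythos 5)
Your proposal is close in spirit to the paper's proof (both use multi-level sampling of per-symbol prefix counts, and the paper also invokes a threshold on $|\Sigma|$ to decide between a one-level and a two-level scheme), but there is a concrete gap in the case split. You ``reduce to $\sigma=O(\log n/\log\log n)$'' by arguing that when $\sigma$ is polynomially large in $n$ you can store all answers explicitly. That disposes of $\sigma=n^{\Omega(1)}$, where indeed $n\log\sigma=\Theta(n\log n)$. But it says nothing about the intermediate regime $\log n/\log\log n \ll \sigma \ll n^{o(1)}$, e.g.\ $\sigma=\log^{10}n$: there $O(n\log\sigma)=O(n\log\log n)$, far short of the $\Theta(n\log n)$ bits needed to store an explicit answer per position, so the ``direct storage'' fallback does not apply. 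And your two-level construction genuinely requires $\sigma=O(\log n/\log\log n)$: you correctly derive that the block-level counter table needs $(n/B)\,\sigma\log S$ bits and that $B$ must satisfy $B\log\sigma\le w$, which forces $\sigma\log S\le O(w)$; since $\log S=\Theta(\log\sigma+\log\log n)$, this gives $\sigma(\log\sigma+\log\log n)=O(\log n)$, i.e.\ exactly the restriction you state. So as written, alphabet sizes between roughly $\log n/\log\log n$ and $n^{\Theta(1)}$ are not covered.

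The fix is what the paper does for its larger-alphabet branch: for $\sigma$ at least, say, $\sqrt{\log n}$, use a \emph{single}-level scheme. Keep checkpoints of $\sigma$ absolute prefix counts ($\log n$ bits each) spaced every $\Theta(\sigma\log n/\log\sigma)$ positions, contributing $O(n\log\sigma)$ bits total, and then store \emph{per element} the local count of its own symbol since the last checkpoint. That local count lies in $[0,\,\sigma\log n/\log\sigma)$ and so takes $O(\log\sigma+\log\log n)$ bits, which is $O(\log\sigma)$ once $\log\sigma=\Omega(\log\log n)$, i.e.\ $\sigma\ge\log^{\Omega(1)}n$; the query is a single addition. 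With that branch in place your two-level scheme need only cover $\sigma=O(\sqrt{\log n})$ (or up to $\log n/\log\log n$, either threshold works), and the proof is complete. The remaining difference from the paper---your in-block popcount via broadcast/XOR/zero-field-detection versus the paper's precomputed $n^{o(1)}$-size lookup table---is a legitimate alternative; the word-trick version avoids the auxiliary table at the cost of the sentinel-bit bookkeeping you already flag, and both give $O(1)$ time within the stated space.
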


In our context, the lemma implies that we can store all pointers from
balls at level $i$ to balls at level $i+\Delta$ using $O(n\Delta)$
bits of space. Indeed, each ball can be inherited by $2^\Delta$
descendants of its current node (the alphabet $\Sigma$ will denote
this choice of the descendant $\Delta$ levels below). To compute the
index of a ball in the list of its node at level $i+\Delta$, we simply
have to count how many balls before it at level $i$ go to the same
descendant (a rank query).

For intuition of how to use this building block, consider an abstract
problem. We need to augment a linked list of $m$ nodes in a manner
similar to a skip list. Any node is allowed to store a pointer
$\Delta$ nodes ahead, but this has a cost of $\Delta$. The goal is to
reach the tail of the list from anywhere in a minimal number of hops,
subject to a bound on the total cost of the skip pointers. For any $2
\le B \le m$, we can solve this problem as follows:
\begin{itemize*}
\item Traversal time $O(\log_B m)$ with pointer cost $O(m \cdot B
  \log_B m)$. Define the level of a node to be the number of trailing
  zeros when writing the node's position in base $B$. Each node on
  level $i$ stores a pointer to the next node on level $i+1$, or to
  the tail if no such node exists. The cost of a node at level $i$ is
  $O(B^{i+1})$, which is $O(\frac{m}{B^i} B^{i+1}) = O(mB)$ across a
  level. The traversal needs to look at $O(\log_B m)$ pointers (one
  per level) before reaching the tail.

\item Traversal time $O(B \log_B m)$ with pointer cost $O(m \cdot
  \log_B m)$.  Each node on level $i$ stores a pointer that skips
  $B^i$ nodes, or to the tail if no such node exists. In other words,
  each node on level $i$ stores a pointer to the next node on level
  $i$ or higher (whichever comes first). The cost of a node on level
  $i$ is $O(B^i)$, so all nodes on level $i$ cost $O(\frac{m}{B^i}
  B^i) = O(m)$. The total cost is thus $O(m \log_B m)$. We can reach
  the tail from anywhere with $O(B \log_B m)$ pointer traversals,
  since we need at most $B$ nodes on each level, before reaching a
  node on a higher level.
\end{itemize*}

Returning to the ball-inheritance problem, we will implement the above
strategies on the $n$ lists of copies of each ball, using
Lemma~\ref{lem:succinct} to store pointers. We use the first strategy
in the regime of fast query time, but higher space (tradeoff (1) in
Lemma~\ref{lem:ballinherit}). Nodes on levels of the tree that are a
multiple of $B^i$ store pointers to the next level multiple of
$B^{i+1}$. This costs $O(B^{i+1})$ bits per ball, so the total cost is
$\sum_i \frac{\lg n}{B^i} \cdot O(B^{i+1}) = O(\lg n \cdot B \cdot
\log_B \lg n)$ bits per ball. This is $O(n B \log_B \lg n)$
words of space. The query time is $O(\log_B \lg n)$, since in each step, we
jump from a level multiple of $B^i$ to a multiple of $B^{i+1}$. Since
the bound is insensitive to polynomial changes in $B$, the trade-off
can be rewritten as: space $O(n B \lg\lg n)$ and query time $O(\log_B
\lg n)$.

The second strategy gives low space, but slower query, i.e.~tradeoff
(2) in Lemma~\ref{lem:ballinherit}. Nodes on levels that are a
multiple of $B^i$ store pointers to $B^i$ levels below (or to the
leaves, if no such level exists). The cost of such a level is $O(B^i)$
bits per ball, so the total cost is $\sum_i \frac{\lg n}{B^i} \cdot
O(B^i) = O(\lg n \cdot \log_B \lg n)$ bits per ball. This is space
$O(n \log_B \lg n)$ words. The query time is $O(B \log_B \lg n)$,
since we need to traverse at most $B$ levels that are multiples of
$B^i$ before reaching a level multiple of $B^{i+1}$. Thus, we obtain
query time $O(B \lg\lg n)$ with space $O(n \log_B \lg n)$.

\subsection{Solving Range Reporting} \label{sec:2d-red}

This subsection will show:

\begin{lemma}
\label{lem:balltoreport}
If the ball inheritance problem can be solved with space $S$ and query
time $\tau$, 2-d range reporting can be solved with space $O(S+n)$ and
query time $O\big(\lg\lg n + (1 + k) \cdot \tau \big).$
\end{lemma}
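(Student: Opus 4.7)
The plan is to reduce 2-d orthogonal range reporting directly to a ball-inheritance instance of size $n$, using only $O(n)$ extra words. I build a perfect binary range tree $T$ whose $i$th leaf holds the point of $x$-rank $i$; at every internal node $v$ the $y$-sorted list of points in $v$'s subtree is its ball list, with a ball inherited left or right according to its $x$-coordinate relative to $v$'s splitter. This matches the ball-inheritance template exactly, so by hypothesis I install an $O(S)$-space, $O(\tau)$-query structure. At every $v$ I also store the left/right inheritance bit-vector equipped with a constant-time succinct rank dictionary~\cite{patrascu08succinct} (total $O(n)$ words), and I precompute a single $O(\lg\lg n)$-time predecessor structure on the global list of $y$-coordinates.

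For a query $[x_1,x_2]\times[y_1,y_2]$ I first compute the LCA $v$ of the leaves $x_1,x_2$ from the most significant differing bit of $x_1\oplus x_2$ in $O(1)$ time. Then I locate the contiguous position range $[j_1,j_2]$ at $v$ of balls whose $y$-coordinate lies in $[y_1,y_2]$, in $O(\lg\lg n+\tau)$ time: a root predecessor search for $y_1,y_2$ is transferred down to $v$ using the ball-inheritance structure. If the range is empty I return. Otherwise, using the rank bit-vector at $v$ I split $[j_1,j_2]$ in $O(1)$ time into sub-ranges at the two children $v_L,v_R$, reducing to two symmetric 3-sided subproblems: report the balls at $v_L$ whose leaves lie in $[x_1,\infty)$, and symmetrically at $v_R$.

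For the 3-sided descent at $v_L$ I walk along the root-to-leaf path of $x_1$. At each spine node $w$, one rank on $w$'s bit-vector yields in $O(1)$ the inherited position range at $w$'s children; if that range is empty I halt the descent. Whenever the $x_1$-path goes left at $w$, the right-sibling subtree is entirely contained in $[x_1,\infty)$, so I report every ball in its position range by invoking the ball-inheritance structure at cost $O(\tau)$ per output point.

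The main obstacle is that a naive level-by-level descent uses $\Theta(\log n)$ steps, which overshoots the target when $\tau=o(\log n)$. The compressing idea is that we only continue the descent through spine nodes whose position range is nonempty, and along any maximal silent run of such spine nodes (whose right siblings are empty) there must be a witness ball strictly below the run; one $O(\tau)$ probe on an endpoint of the current position range via ball-inheritance recovers that witness's leaf and hence the first level farther down at which the spine branches in a direction that produces output. Charging each such probe either to the output it triggers or at most once to the termination of the descent gives total 3-sided cost $O((1+k)\tau)$. Combined with the initial $O(\lg\lg n)$, the query runs in $O(\lg\lg n+(1+k)\tau)$ time, and the overall structure occupies $O(S)+O(n)=O(S+n)$ space as required.
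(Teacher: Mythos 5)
Your overall reduction (a perfect binary tree over $x$ with balls in $y$-order at the root, LCA from $x_1\oplus x_2$, a $y$-predecessor step, and two 3-sided subproblems at the children of the LCA) matches the paper's setup, but the engine you use for the 3-sided subproblems has a genuine gap. You replace the paper's succinct range-minimum/range-maximum structures by a descent along the spine of $x_1$ plus a ``compressing'' probe meant to jump over silent stretches. The key claim---that a single ball-inheritance probe on an endpoint of the current position range ``recovers \dots\ the first level farther down at which the spine branches in a direction that produces output''---is false. The $y$-rank order of the balls in a position range is independent of their $x$-coordinates, so the ball that first leaves the spine (and must be reported) can lie strictly in the interior of the range, and an endpoint probe reveals nothing about it. Concretely, take $x_1=0$, a node covering leaves $[0,15]$, and a position range of three balls in $y$-rank positions $1,2,3$ having $x$-coordinates $0,2,1$ respectively. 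The first output occurs at the node covering $[0,3]$, where the ball with $x=2$ enters the right sibling $[2,3]$; yet probing position~$1$ gives leaf~$0$ (identical to $x_1$, no divergence information) and probing position~$3$ gives leaf~$1$, which diverges one level lower, at $[0,1]$. A jump guided by either endpoint skips the ball with $x=2$. Moreover, even if you somehow knew the correct target level, you cannot produce the position range there in $O(\tau)$ time: the per-node rank dictionaries you store support only one level at a time, and the ball-inheritance black box returns only the final leaf, not intermediate positions.

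The missing ingredient is the one the paper introduces precisely to avoid any descent: a succinct RMQ (resp.\ range-max) structure at each right (resp.\ left) child, built over the array indexed by $y$-rank and keyed by $x$-coordinate, taking $O(n)$ words overall. The query then runs a priority-search-tree recursion entirely at the two children of the LCA: query the RMQ for the index of the extremal-$x$ ball in the current $y$-interval, make one ball-inheritance probe to learn its leaf (hence its $x$-coordinate), and if it is in range report it and recurse on the two residual $y$-subintervals, else stop. This issues exactly $O(1+k)$ ball-inheritance probes and no spine walk. Your predecessor step is also underspecified: a single predecessor search on the global $y$-list ``transferred down to $v$'' is not $O(\lg\lg n+\tau)$ with the structures you have; the paper stores succinct predecessor indices at every $\lg\lg n$-th level so that the remaining walk down to the LCA's children is bounded by $\lg\lg n$ one-level rank steps.
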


Consider $n$ points in 2-d rank space; we may assume $n$ to be a power
of two. We build a perfect binary tree over the $x$-axis. Each ball
will represent a point, and the leaf where the ball ends up
corresponds to its $x$ coordinate. The order of balls at the root is
the sorted order by $y$ coordinate. We store a structure for this ball
inheritance problem. The true identity of the points (their $x$ and
$y$ coordinates) are only stored at the leaves, taking linear
space. We will now describe additional data structures that allow us
to answer range reporting with query time $O(\lg\lg n)$, plus
$(O(1)+k)$ ball inheritance queries.

Our first ingredient is a succinct data structure for the range
minimum problem (RMQ). Consider an array $A$ of $n$ keys (which can be
compared). The query is, given an interval $[i,j]$, report the index
of the minimum key among $A[i], \dots, A[j]$. Note that a data
structure for this problem does not need to remember the
keys. Information theoretically, the answer is determined if we know
the Cartesian tree~\cite{GaBeTa} of the input; a tree takes just $2n$ bits to
describe. Effective data structures matching this optimal space bound
are known. See, for example, \cite{fischer10rmq}, which describes a
data structure with $2n + o(n)$ bits of space and $O(1)$ query time.

In each node that is the right child of its parent, we build a
succinct RMQ data structure on the points stored in the subtree rooted
at that node. In this structure, we use the $y$ rank of the points as
indices in the array, and their $x$ coordinates as keys. In each node
that is a left child of its parent, we build a range-maximum data
structure (equivalently, an RMQ data structure on the mirrored input).
Since each data structure takes a number of bits linear in the size of
the node, they occupy a total of $O(n\lg n)$ bits, i.e. $O(n)$ words
of space.

To report points in the range $[x_1, x_2] \times [y_1, y_2]$, we
proceed as follows:
\begin{enumerate*}
\item Compute the lowest common ancestor $\mathrm{LCA}(x_1, x_2)$ in
  the perfect binary tree. This is a constant time operation based on
  the xor of $x_1$ and $x_2$: the number of zero bits at the end
  indicates the height of the node, and the rest of the bits indicate
  the nodes identity. (For instance, we can use an array of $n$
  entries to map the $x_1 \oplus x_2$ to the right node.)

\item We convert $[y_1, y_2]$ into the rank space of points inside the
  left and right child of $\mathrm{LCA}(x_1, x_2)$. This entails
  finding the successor $\hat{y}_1$ of $y_1$ among the $y$ values of
  the points under the two nodes, and the predecessor $\hat{y}_2$ of
  $y_2$. See below for how this is done.

\item We descend to the right child of $\mathrm{LCA}(x_1, x_2)$. Using
  the RMQ data structure, we obtain the index $m$ (the $y$ rank) of
  the $x$-minimum point in the range $[\hat{y}_1, \hat{y}_2]$. We use
  the ball-inheritance structure to find the leaf of this point. We
  retrieve the $x$ coordinate of the point from the leaf, and compare
  it to $x_2$. If greater, there is no output point in the right node.
  If smaller, we report this point and recurse in $[\hat{y}_1, m-1]$
  and $[m+1, \hat{y}_2]$ to report more points. We finally apply the
  symmetric algorithm in the left child of $\mathrm{LCA}(x_1, x_2)$,
  using the range maxima until the points go below $x_1$.
\end{enumerate*}

\noindent
The cost of step 3 is dominated by the queries to the ball-inheritance
problem. The number of queries is two if the range is empty, and
otherwise at most twice the number of points reported in each child of
the $\mathrm{LCA}$.

We now describe how to support step 2 in $O(\tau + \lg\lg n)$ time,
with just $O(n)$ space in total. We will use a succinct index for
predecessor search. Consider supporting predecessor queries in a
sorted array $A[1\twodots n]$ of $w$-bit integers. If we allow the
query to access entries of the array through an oracle, it is possible
to obtain a data structure of sublinear size. More precisely, one can
build a data structure of $O(n \lg w)$ bits, which supports queries in
$O(\lg w)$ time plus oracle access to $O(1)$ entries; see \cite[Lemma
  3.3]{grossi09succinct}. (This idea is implicit in fusion trees
\cite{fredman93fusion}, and dates further back to
\cite{ajtai84hashing}.)

We build such a data structure on the list of $y$ coordinates at each
node. The oracle access to the original $y$ coordinates is precisely
what the ball-inheritance problem supports. Since our points have
coordinates of $O(\lg n)$ bits, each data structure uses $O(\lg\lg n)$
bits per point, so the total space is $O(n \lg\lg n)$ words. To reduce
the space to linear, we store these predecessor structures only at
levels that are a multiple of $\lg\lg n$. From $\mathrm{LCA}(x_1,
x_2)$, we go up to the closest ancestor with a predecessor
structure. We run predecessor queries for $y_1$ and $y_2$ at that
node, which take $O(\lg\lg n)$ time, plus $O(1)$ queries to the
ball-inheritance problem. Then, we translate these predecessors into
the rank space of the left and right child of $\mathrm{LCA}(x_1,
x_2)$, by walking down at most $\lg\lg n$ levels in the
ball-inheritance problem (with constant time per level). This
concludes the proof of Lemma~\ref{lem:balltoreport}, which combined
with Lemma~\ref{lem:ballinherit} proves Theorem~\ref{thm:2d}.

\section{Range Reporting in 3-d}\label{sec:3d}

In this section, we present a new data structure for 3-d orthogonal
range reporting.  We find it more convenient now to ignore the 
default assumption that points are given in rank space.
The special case of dominance (i.e., 3-sided)
reporting can already be solved with
$O(n)$ space and $O(\lg\lg U + k)$ time 
by known methods~\cite{afshani:dominance}, using the latest
result on orthogonal planar point location~\cite{chan_pps}.
We will show how to ``add sides''  without changing the asymptotic query time
and without increasing space by too much.

\subsection{The 3-d 4-Sided Problem}

Our method is based on a simple variant of Alstrup, Brodal, and Rauhe's 
grid-based method~\cite{Brodal00h}.   
Instead of a grid of dimension near $\sqrt{n}\times\sqrt{n}$
as used by Alstrup et al.'s and Karpinski and Nekrich's method~\cite{Nekrich.COCOON}, 
our key idea is to use a grid of dimension near $(n/t)\times t$ 
for a judiciously chosen parameter~$t$.

Specifically, consider the problem of answering
range reporting queries for 4-sided boxes in 3-d, i.e.,
the boxes are bounded in 4 out of the 6 sides where the unbounded
sides are from different coordinate axes.
W.lo.g., assume that query ranges are
unbounded from below in the $y$ and $z$ directions.
Suppose that there is a base data structure for solving the 3-d 4-sided problem
with $S_0(n)$ space {\em in bits\/} and $Q_0(n,k)$ query time.

\paragraph{The data structure.}
Fix parameters $t$ and $C$ to be set later.  
Let $S$ be a set of $n$ points in $[U]^3$.
Build a 2-d grid on the $xy$-plane
 with $n/(Ct)$ rows and $t$ columns, so that
each row contains $Ct$ points and each column contains
$n/t$ points.\footnote{Abusing notation slightly,
we will not distinguish between a 2-d region (e.g., a row, column, or grid cell)
and its lifting in 3-d.
For simplicity, we ignore floors and ceilings.}
The number of grid cells is $n/C$.
\LONG{\par}
We build a data structure for $S$ as follows:
\begin{enumerate}
\item[0.] For each of the $t$ columns (in left-to-right order), 
build a data structure recursively
for the points inside the column.
\item For each of the $n/(Ct)$ rows, build a base data structure directly for
the points inside the row, with $S_0(Ct)$ space and
$Q_0(Ct,k)$ query time.  Also build a predecessor search structure 
for the $n/(Ct)$ horizontal grid lines.
\item For each of the $t$ columns, build a 3-sided reporting
data structure~\cite{chan_pps} for the points inside the column,
with $O(n/t)$ words of space, or
$O((n/t)\lg U)$ bits of space, and $O(\lg\lg U + k)$ query time.
\item Let $G$ be the set of at most $n/C$ points formed by
taking the $z$-lowest point out of each nonempty grid cell.  Build 
a data structure for $G$ for 4-sided queries using any known method~\cite{chan_pps}
with $O((n/C)\lg^{O(1)}n)$ space and $O(\lg\lg U + k)$ query time.
\item Finally, for each nonempty grid cell, store the
list of all its points sorted in increasing $z$-order.
\end{enumerate}
Note that by unfolding the recursion, we can view our 
data structure as a degree-$t$ tree $\TT$ where
the points in the leaves are arranged in $x$-order and
each node stores various auxiliary data structures (items 1--4).

The space usage in bits satisfies the recurrence
\[ S(n)\ =\ t S(n/t) + (n/(Ct))S_0(Ct) + O(n\lg U + (n/C)\lg^{O(1)}n).
\]
For the base case, we have $S(n)=O(S_0(Ct))$ for $n<Ct$.
Solving the recurrence gives 
\[ S(n)\ =\ O(\log_t n \cdot [(n/(Ct))S_0(Ct) + n\lg U + (n/C)\lg^{O(1)}n])
\]
(assuming that $S_0(n)/n$ is nondecreasing).
The third term disappears by setting $C=\lg^c n$ for a sufficiently
large constant~$c$.


\paragraph{The query algorithm.}
Suppose we are
given a query range $q=[x_L,x_R]\times (-\infty,y_0]\times (-\infty,z_0]$
and the $x$-ranks of $x_L$ and $x_R$ w.r.t.\ the input point set.
Let $v$ be the lowest common ancestor
of the two leaves of $\TT$ whose $x$-range contain $x_L$ and $x_R$. 
We can find $v$ by performing a word operation on the two given $x$-ranks 
(no special LCA data structures are required since $\TT$ is perfectly balanced).

From now on, we work exclusively at node $v$ of the tree.
There, $q$ intersects more than one column.
Say $x_L$ and $x_R$ are in columns $j_L$ and $j_R$ (which can
be identified in $O(1)$ time as we know the $x$-ranks).
Say $y_0$ is in row $i$, computable by predecessor 
search in $O(\lg\lg U)$ time.
We can then answer the query as follows:
\begin{enumerate}
\item Let $q_T$ be the (``top'') portion of $q$ inside row $i$.
Report all points in $q_T$ (which is 4-sided) by the base data structure 
at row $i$.  The cost is $Q_0(Ct,k')$ if $k'$ denotes the number of points in $q_T$.
\item Let $q_L$ and $q_R$ be the portions of $q-q_T$ inside
columns $j_L$ and $j_R$ respectively.
Report all points in $q_L$ and $q_R$ (which are 3-sided inside
columns $j_L$ and $j_R$ respectively) by
the 3-sided data structure at columns $j_L$ and $j_R$.  The cost is
$O(\lg\lg U)$ plus the number of points in $q_L$ and $q_R$.
\item Let $q_I$ be the remaining (``interior'') portion $q-(q_T\cup q_L\cup q_R)$.
Find all points of $G$ in $q$ by
querying the data structure for $G$.
The cost is at most $O(\lg\lg U)$ plus the number of points in $q_I$.
\item For each point $s\in G$ found in step~3, report all points 
in $s$'s grid cell with $z$-values below $z_0$ by a linear search 
over the cell's $z$-sorted list.
The cost is linear in the number of points in $q_I$.
\end{enumerate}
The overall query time  is thus
$Q(n,k)\ =\ Q_0(Ct,k') + O(\lg\lg U + k-k')$
for some $k'\le k$.

\paragraph{Bootstrapping.}
Assume the availability of a solution with
$S_0(n)=O(n\lg U + n\lg^{1+1/\ell}n)$ and
$Q_0(n,k)=O(\lg\lg U + k)$ for a constant $\ell$.
(For a base case with $\ell\in(0,1]$, we can start with any known method with
$O(n\lg^{O(1)}n)$ space and $O(\lg\lg U+k)$ query time~\cite{chan_pps}.)
Setting $t$ with $\lg t=\lg^{\ell/(\ell+1)}n$ then yields 
\[ S(n) \ =\ O((\lg n)/(\lg t) \cdot [n\lg U + n\lg^{1+1/\ell}Ct])
        \ =\ O(n\lg U\lg^{1/(\ell+1)}n)
\]
and $Q(n,k)= O(\lg\lg U + k)$.

We need one last trick: rank space reduction.  Initially, 
store the $x$-, and $y$-, and $z$-values in sorted arrays,
build predecessor search structures for them, and
afterwards, replace all values by their ranks.  This way, we have
reduced $U$ to $n$, and the space (in bits) of the data structure improves to
$O(n\lg U + n\lg^{1+1/(\ell+1)}n)$.  For a query range $q$, we can initially
determine the $x$-, $y$-, and $z$-ranks of $q$'s endpoints in 
$O(\lg\lg U)$ time~\cite{vEB} before running the query algorithm.  Incidentally,
this also fulfills
the assumption that the $x$-ranks of $q$'s endpoints are given. 
After the query, we can recover the $x$-, $y$-, and $z$-values of
each reported point by looking up the sorted arrays. The query time 
remains $O(\lg\lg U+k)$
(though the constant factor in the $k$ term increases).

By bootstrapping $\lceil 1/\eps\rceil$ times, we finally obtain a solution
for the 3-d 4-sided problem with $O(n\lg U + n\lg^{1+\eps}n)$ bits of
space, i.e., 
$O(n\lg^\eps n)$ words of space (by packing), and $O(\lg\lg U + k)$ query time.

\subsection{The 3-d 5-Sided/6-Sided Problem}

\SHORT{
We can solve the 3-d 5-sided problem in the same way, and
we can reduce the 3-d general (i.e., 6-sided) problem to the 5-sided
case by a standard reduction by paying a $\log$ factor in space
(See the appendix). Therefore:
}

\LONG{
We can solve the 3-d 5-sided problem in the same way.  W.l.o.g., assume that
the ranges are unbounded from below in the $z$ direction.  Item~2 
now stores 4-sided data structures, and space increases
by a $\log^\eps n$ factor only as a result.  The query algorithm proceeds
similarly.  We now have an additional bottom portion $q_B$,
but $q_T,q_B,q_L,q_R$ are all 4-sided, unless $q$ lies completely
inside a column (in which case we only need one query to
a base data structure).
Our method thus solves the 3-d 5-sided problem with
$O(n\lg^{O(\eps)} n)$ words of space and $O(\lg\lg U + k)$ query time.

It is known (e.g., see~\cite{Nekrich.SOCG07})
that the $j$-sided problem can be reduced to the $(j-1)$-sided
problem by standard binary divide-and-conquer, 
where the space increases by a logarithmic factor but
the query time is unchanged (if it exceeds $\lg\lg$).
Thus, we can get the following result for
the 3-d general (i.e., 6-sided) problem:
}

\begin{theorem}\label{thm:3d}
There is a data structure for 3-d orthogonal range reporting
with $O(n\lg^{1+\eps}n)$ space (in words) and $O(\lg\lg U + k)$ query time.
\end{theorem}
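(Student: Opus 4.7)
The plan is to bootstrap from the 3-d 4-sided result established in the previous subsection. I will first extend that method to the 3-d 5-sided problem by essentially the same grid-based construction, and then lift 5-sided to 6-sided via a standard one-dimensional range-tree reduction that costs only a logarithmic factor in space while preserving the query time.

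For the 4-to-5 extension, suppose, w.l.o.g., that the query range is bounded on both sides in $x$ and $y$ and unbounded below in $z$. I reuse the same grid-based data structure with one change: item~2 now stores a 4-sided (rather than 3-sided) structure inside each column, since a query restricted to a single column is bounded on 4 sides. The query decomposition at node $v$ of the recursion tree $\TT$ now produces, in addition to the pieces $q_T, q_L, q_R, q_I$, a symmetric ``bottom'' piece $q_B$ in the row containing the lower $y$-boundary. Crucially, $q_T$ and $q_B$ are each 4-sided within their row, $q_L$ and $q_R$ are each 4-sided within their column, and the interior $q_I$ is handled as before by the structure for $G$ followed by linear scans through the grid cells. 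Plugging the $O(n\lg^\eps n)$-word 4-sided solution into items 1 and 2 inflates the bounds by only a $\lg^\eps n$ factor, yielding $O(n\lg^{O(\eps)}n)$ words of space and $O(\lg\lg U + k)$ query time for the 5-sided problem.

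For the 5-to-6 reduction, build a balanced binary tree on the $x$-coordinates and, at every node $w$, store a 5-sided structure (obtained from the 5-sided solution above by permuting axes, so that the unbounded side is now in $x$) for the points in $w$'s subtree. Given a 6-sided query with $x$-range $[x_L,x_R]$, let $v$ be the LCA of the two leaves containing $x_L$ and $x_R$; we can locate $v$ in $O(\lg\lg U)$ time. Within $v$'s left subtree the constraint $x \le x_R$ is automatic, so the portion of the query that falls there is 5-sided (unbounded above in $x$); symmetrically for $v$'s right subtree. Thus only two 5-sided subqueries, at $v$'s two children, are needed. Each point lies in $O(\lg n)$ subtrees, so the total space is $O(n\lg n)$ times the per-structure overhead, i.e.\ $O(n\lg^{1+\eps}n)$ words, while the query time remains $O(\lg\lg U + k)$.

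The main obstacle is precisely this last step: a naive ``decompose the $x$-range into $O(\lg n)$ canonical intervals and query each'' approach would blow the query time up by a $\lg n$ factor. The key observation that saves us is that the two 5-sided substructures at the two children of the LCA suffice, keeping the query time at the optimal $O(\lg\lg U + k)$ while paying only a single $\lg n$ factor in space.
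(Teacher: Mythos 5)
Your proposal is correct and follows essentially the same route as the paper: extend the 4-sided grid scheme to 5-sided by replacing the 3-sided column structures with 4-sided ones (so that all decomposition pieces $q_T, q_B, q_L, q_R$ are 4-sided, costing only an extra $\lg^\eps n$ factor), then lift to 6-sided via a binary range tree where only the two 5-sided substructures at the children of the LCA are queried, paying a single $\lg n$ factor in space and leaving the query time at $O(\lg\lg U + k)$. The paper expresses the last step more tersely as the ``standard binary divide-and-conquer'' reduction from $j$-sided to $(j{-}1)$-sided, but the mechanism you spell out is exactly that reduction.
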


\newcommand{\THREEDAPPENDIX}{



\subsection{Higher dimensions and applications.}
It is known that $d$-dimensional orthogonal range reporting can be
reduced to $(d-1)$-dimensional orthogonal range reporting by 
using a range tree with fan-out $b$,
where the space increases by a $b^{O(1)}\log_b n$ factor and
the query time increases by a $\log_b n$ factor.  By setting $b=\lg^\eps n$
(and applying rank space reduction at the beginning), Theorem~\ref{thm:3d} implies:

\begin{corollary}
There is a data structure for $d$-dimensional orthogonal range reporting
for any constant $d\ge 4$ with
$O(n\lg^{d-2+\eps}n)$ space and $O((\lg n/\lg\lg n)^{d-3}\lg\lg n + k)$
query time.
\end{corollary}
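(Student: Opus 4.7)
The plan is to iterate the standard range-tree reduction described in the paragraph preceding the corollary: each step from dimension $d'$ to $d'-1$ uses a range tree of fan-out $b$ on one axis, inflates space by a factor $b^{O(1)}\log_b n$, and inflates the non-output part of the query time by a factor $\log_b n$. Taking the 3-d structure of Theorem~\ref{thm:3d} as the base case, a $(d-3)$-fold iteration with $b=\lg^\eps n$ will deliver the claimed trade-off. Before starting the iteration I would perform rank-space reduction on all $d$ coordinates at the outset, paying an additive $O(\lg\lg U)$ predecessor cost once so that the base-case query time reads $O(\lg\lg n + k)$; the auxiliary sorted coordinate arrays cost only $O(n)$ extra space.

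For the space analysis, with $b=\lg^\eps n$ we have $\log_b n = O(\lg n/\lg\lg n)$ and $b^{O(1)}=\lg^{O(\eps)}n$. Starting from $S_3(n)=O(n\lg^{1+\eps}n)$ and compounding $d-3$ times gives
\[
S_d(n) \;=\; O\!\left(n\lg^{1+\eps}n \cdot \bigl(\lg^{O(\eps)}n\cdot \lg n/\lg\lg n\bigr)^{d-3}\right)
\;=\; O\!\left(n\lg^{d-2+O(\eps)}n / (\lg\lg n)^{d-3}\right).
\]
Because $d$ is constant, a harmless renaming of $\eps$ (starting each reduction step with a slightly smaller constant) absorbs both the divisor $(\lg\lg n)^{d-3}$ and the compounded $\lg^{O(\eps)}n$ factor, yielding $O(n\lg^{d-2+\eps}n)$.

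For query time, each reduction step visits $O(\log_b n)$ canonical subtrees at the current outermost level and invokes a lower-dimensional query with cost $T(n)+k_i$ on each; since the output sizes $k_i$ over these canonical subtrees sum to $k$, the total is $O(\log_b n)\cdot T(n) + k$, so the $\log_b n$ blowup attaches only to the non-output term. Iterating $d-3$ times from $T_3(n)=O(\lg\lg n)$ yields $T_d(n)=O((\lg n/\lg\lg n)^{d-3}\lg\lg n)$, and hence $Q_d(n,k)=O((\lg n/\lg\lg n)^{d-3}\lg\lg n + k)$. The only subtlety is the accounting that keeps the $+k$ term additive across repeated reductions; once that is observed, no further obstacle remains, since all the real work lies in Theorem~\ref{thm:3d}.
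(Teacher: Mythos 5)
Your proposal matches the paper's argument exactly: the paper also invokes the standard fan-out-$b$ range-tree reduction (space factor $b^{O(1)}\log_b n$, query factor $\log_b n$ on the non-output term), sets $b=\lg^\eps n$, applies rank-space reduction up front, and iterates from the 3-d base case of Theorem~\ref{thm:3d}. Your arithmetic filling in the compounded factors (and the observation that constant $d$ lets you rescale $\eps$ to absorb the $\lg^{O(\eps)}n$ and drop the helpful $(\lg\lg n)^{d-3}$ divisor) is correct and is precisely the calculation the paper leaves implicit.
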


Our method also works for emptiness queries; the same bounds hold
with $k$ set to 0. Here, dominance emptiness structures 
are sufficient in item~2, and item~4 is unnecessary.

\IGNORE{
Our 3-d range reporting method can also be modified to solve the
2-d range minimum query problem (finding the point inside the
query rectangle with the minimum priority, assuming that
each input point is given a priority value).
Our data structure achieves
$O(n\lg^\eps n)$ space and optimal $O(\lg\lg n)$ query time.
(In contrast, modifying Karpinski and Nekrich's 
3-d range emptiness method~\cite{Nekrich.COCOON}
gives $O(n\lg^{O(1)}\lg n)$ space but $O(\lg^2\lg n)$ query time.)
}

Range minimum queries  (finding the point inside a
query range with the minimum priority, assuming that
each input point is given a priority value) are closely related.
For example, the decision version of 2-d range minimum queries (deciding whether
the minimum is at most a given value) reduces to 3-d 5-sided emptiness
queries. It is no surprise then that we can obtain
the same result for 2-d range minimum queries as 3-d 5-sided emptiness.
Here, in item~2, we need to replace 3-d 3-sided emptiness structures
with 2-d dominance range minimum structures, but 2-d dominance range
minimum reduces to point location in the vertical projection of a
lower envelope of 3-d orthants.  This is an orthogonal 2-d point location problem,
which can be solved with $O(n\lg U)$ space in bits and $O(\lg\lg U)$ query 
time~\cite{chan_pps}.  The same analysis thus carries through.

\begin{theorem}
There is a data structure for 2-d range minimum queries
with $O(n\lg^{\eps}n)$ space and $O(\lg\lg U)$ query time.
\end{theorem}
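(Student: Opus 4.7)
The plan is to mimic the grid-based framework of Section~\ref{sec:3d} used for 3-d 5-sided emptiness, but with each reporting/emptiness substructure replaced by the corresponding 2-d range minimum structure. Points $(x_i,y_i)$ with priorities $z_i$ lift to 3-d points $(x_i,y_i,z_i)$, and the decomposition used there of a 2-d rectangular query into a top row portion $q_T$, column portions $q_L,q_R$, and interior portion $q_I$ lets us recover the global minimum by taking the minimum over the minima returned for each piece. No additional $z$-bound is ever imposed; each subquery is a 2-d range minimum of one of three types (4-sided, 3-sided inside a column, or the 2-d box restricted to a single cell).

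I would implement items 1--4 of Section~\ref{sec:3d} verbatim: for each row build a 2-d 4-sided range minimum structure on its $Ct$ points; for each column build a 2-d ``3-sided'' range minimum structure on its $n/t$ points; for the set $G$ of priority-lowest points per cell build a 2-d 4-sided range minimum structure; and for each nonempty cell store its points in priority-sorted order, so that once the $G$-query identifies a cell whose lowest point lies in $q_I$, that cell's minimum inside the box can be recovered by linear scan terminating at the first point falling inside the query. The query algorithm is then literally the 5-sided algorithm, except that each sub-call returns a value rather than a (non)emptiness bit, and the global answer is the minimum of the returned values.

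The technical heart, and the main obstacle, is to realize the analogue of item~2 with linear bit cost: a 2-d dominance range minimum structure in $O(n\lg U)$ bits and $O(\lg\lg U)$ query time. For this I would associate to each point $(x_i,y_i,z_i)$ a 3-d orthant anchored at $(x_i,y_i,z_i)$ and extending toward the query direction in $x,y$ and downward in $z$, and take the lower envelope of these orthants in the $z$-direction. Its vertical projection onto the $xy$-plane is an orthogonal planar subdivision each of whose faces is labeled by the point realizing it; this label is exactly the dominance range minimum answer for any query inside that face. By Chan's orthogonal planar point location structure~\cite{chan_pps}, this subdivision is stored in $O(n\lg U)$ bits and queried in $O(\lg\lg U)$ time. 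The 3-sided and 4-sided variants needed above reduce to a constant number of dominance range minimum queries on reflected copies of the input, preserving the bit-space bound.

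With these substructures in hand, the recurrence and bootstrapping argument used to prove Theorem~\ref{thm:3d} goes through unchanged, yielding $O(n\lg U + n\lg^{1+\eps}n)$ bits total, i.e., $O(n\lg^\eps n)$ words after packing, with $O(\lg\lg U)$ query time (the rank-space reduction at the outset uses van Emde Boas predecessor search, matching the query bound). The point to verify carefully is that the 5-sided-style analysis (rather than the 4-sided one) indeed costs only a $\lg^{O(\eps)}n$ factor in space, exactly as in the 5-sided case; once this is confirmed, the claimed bounds follow.
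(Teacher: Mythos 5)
Your key technical insight coincides with the paper's and is correct: 2-d dominance range minimum reduces to orthogonal planar point location in the $xy$-projection of the lower envelope of 3-d orthants, and this subdivision can be stored in $O(n\lg U)$ bits with $O(\lg\lg U)$ query time via Chan's orthogonal point location~\cite{chan_pps}. (A small sign slip: the orthant anchored at $(x_i,y_i,z_i)$ should extend \emph{upward} in $z$, i.e.\ be of the form $[x_i,\infty)\times[y_i,\infty)\times[z_i,\infty)$, so that the lower envelope at $(a,b)$ equals $\min\{z_i : x_i\le a,\, y_i\le b\}$; an orthant extending downward in $z$ would make the lower envelope identically $-\infty$.)

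The genuine gap is your final claim that ``the 3-sided and 4-sided variants needed above reduce to a constant number of dominance range minimum queries on reflected copies of the input.'' This is false. A 3-sided query such as $[x_L,\infty)\times[y_1,y_2]$ is bounded on \emph{both} sides in $y$, while each dominance query is a quarter-plane bounded on at most one side per axis; no union of $O(1)$ quarter-planes (in any reflections) equals the required strip, and unlike range counting, range minimum admits no inclusion--exclusion, so the subqueries must exactly tile the original range. Were your reduction true, the entire theorem would already follow from $O(1)$ reflected dominance structures with no grid recursion at all, which should have been a red flag.

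The paper's route avoids this by a two-stage bootstrap. It first solves \emph{3-sided} 2-d RMQ by mirroring the 3-d 4-sided emptiness structure of Section~\ref{sec:3d}: after peeling off the single top row, the column subqueries have a half-infinite $y$-range, so item~2 needs only 2-d \emph{dominance} RMQ per column --- precisely where the orthant-envelope/point-location idea gives $O((n/t)\lg U)$ bits per column, matching the bit budget in the recurrence. It then obtains 4-sided 2-d RMQ by mirroring the 3-d 5-sided step, where item~2 now stores the 3-sided 2-d RMQ structures just built, costing an extra $\lg^\eps n$ factor exactly as in the 5-sided argument. So your overall plan is in the right spirit, but the passage from dominance to 3-sided and then 4-sided must go through the grid recursion; you cannot shortcut it with a constant number of reflected dominance queries.
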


In contrast, modifying Karpinski and Nekrich's 
3-d range emptiness method~\cite{Nekrich.COCOON}
yields a data structure for 2-d range minimum queries with
$O(n\lg^{O(1)}\lg n)$ space but $O(\lg^2\lg n)$ query time.

Our method can also give an alternative solution
to 2-d orthogonal range reporting with $O(n\lg^\eps n)$ space
and $O(\lg\lg U + k)$ time.  This solution is arguably slightly simpler 
than Alstrup, Brodal, and Rauhe's original method~\cite{Brodal00h}, as
their method requires constant-time 1-d range queries as a subroutine,
and also requires a more intricate way of handling rank space reduction.
}
\LONG{\THREEDAPPENDIX}


\section{Offline Range Reporting}\label{sec:offline}
In this section, we present our $O(n\lg n + k)$ expected time solution
for the offline 4-d dominance reporting problem on $n$ query points and
$n$ input points, where $k$ denotes the total output size of all $n$
queries. In this section, we fix $w=\eps\log N$ where $N$ denotes the
maximum input size. Any $w$-bit word operation we introduce can 
be simulated in $O(1)$ time by table lookup, after preprocessing
in sublinear time $2^{O(w)}=N^{O(\eps)}$.

\subsection{Preliminaries}

We begin by describing some key subroutines and tools we need. 
The first subroutine is an algorithm for a special case of
offline 2-d orthogonal point location.  
Chan and \Patrascu~\cite{ChaPatSODA10} recently studied
the offline 2-d orthogonal range counting problem and obtained a linear-time
algorithm for the case when the number of points is smaller than
$2^{O(\sqrt{w})}$.  From this result, they then obtained an 
$O(n\sqrt{\lg n})$ algorithm for
the general case.  We apply their bit-packing technique and show
that a similar result holds for orthogonal point location
(see the appendix for the proof):

\begin{lemma}
\label{lem:offpl}
There is an algorithm for offline 2-d orthogonal point location 
on $n$ query
points and $n$ disjoint axis-aligned rectangles that runs in time
$O(n)$ if $n\le 2^{O(\sqrt{w})}$ and the coordinates have been pre-sorted.
\end{lemma}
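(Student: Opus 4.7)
The plan is to adapt the bit-packing technique of Chan and \Patrascu's offline 2-d orthogonal range counting algorithm~\cite{ChaPatSODA10} to the point location setting, exploiting the same size assumption $n \le 2^{O(\sqrt{w})}$ to pack $\Omega(\sqrt{w})$ coordinates per machine word.

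First, since the coordinates are pre-sorted, I would move to rank space $[n]\times [n]$, so that each coordinate uses only $O(\log n) = O(\sqrt{w})$ bits. I would then perform a vertical sweep over the $x$-axis, processing the $O(n)$ left-edge events, right-edge events, and query events in $x$-sorted order. Because the rectangles are disjoint, the set of active rectangles at any sweep position induces a 1-d partition of the $y$-axis in which each rank-space $y$-value is labeled by at most one rectangle ID. A left-edge event assigns a contiguous $y$-range to a new label, a right-edge event clears that range, and a query event reads the label at its $y$-value. I would maintain this 1-d labeling in a bit-packed array so that the word-parallel primitives from~\cite{ChaPatSODA10} apply.

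The main obstacle is amortizing the update cost, since a single rectangle may span $\Omega(n)$ rank-space $y$-values and a naive per-event update would cost $\Omega(n/\sqrt{w})$, for a total of $\Omega(n^2/\sqrt{w})$. To resolve this, I would follow Chan--\Patrascu by subdividing both the $x$- and $y$-axes into blocks of size $\sqrt{w}$ and recursing: the labeling within each $y$-block fits into a single word, and the sequence of events touching each pair of $x$- and $y$-blocks can be processed in $O(1)$ amortized time by table lookup, since a table of size $2^{O(\sqrt{w})} = n^{O(1)}$ can be precomputed within our budget. A base case in which the entire instance has $n \le \sqrt{w}$ fits into a single word and is handled directly by word-parallel range-assignments and lookups. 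A careful accounting argument, mirroring the one for range counting in~\cite{ChaPatSODA10}, shows that these blockwise local solutions combine to yield the claimed $O(n)$ bound.
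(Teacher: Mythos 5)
Your high-level instinct---move to rank space, bit-pack $\Theta(w/\lg n)$ coordinates per word, and amortize work over words---is the right one, and it matches the spirit of the paper's proof. But the specific mechanism you propose (a vertical sweep maintaining a $y$-indexed labeling array, with the $y$-axis cut into blocks of size $\sqrt{w}$) has a real gap at the exact place you identify as ``the main obstacle.'' A single left-edge event assigns a $y$-range that can intersect $\Theta(n/\sqrt{w})$ $y$-blocks, and there are $\Theta(n)$ such events. Disjointness helps only mildly: disjoint rectangles in the $[O(n)]^2$ rank grid have total $y$-span at most $O(n^2)$, so the number of (event, $y$-block) incidences can still be $\Theta(n^2/\sqrt{w})$, which for $n=2^{\Theta(\sqrt{w})}$ is vastly super-linear. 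Your claim that ``the sequence of events touching each pair of $x$- and $y$-blocks can be processed in $O(1)$ amortized time'' does not give $O(n)$ total time, because the number of (event, block) incidences---the quantity you would have to amortize against---is itself the thing that is too large. Flat blocking and ``recursing'' as stated does not fix this: you cannot afford to even enumerate, per sweep event, the $y$-blocks the event touches.

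The paper avoids the sweep entirely. It builds a binary segment tree (a depth-$O(\lg n)$ trie) over the $x$-coordinates and decomposes each rectangle's $x$-projection into $O(\lg n)$ canonical nodes; at each node the associated $y$-intervals are pairwise disjoint (precisely because the rectangles are disjoint). Each query point is routed to the $O(\lg n)$ nodes on its root-to-leaf path. What remains at every node is a \emph{one-dimensional} disjoint-interval stabbing problem on sorted, word-packed lists, solved by a linear merge. The rectangle and query lists at all nodes of one level have total length $O(n)$ and are maintained in packed form by a top-down radix-style distribution, costing $O((n\lg n)/w)$ per level; over $O(\lg n)$ levels this is $O((n\lg^2 n)/w)$, which is $O(n)$ precisely when $n\le 2^{O(\sqrt w)}$. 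The crucial difference from your proposal is that a rectangle is charged to only $O(\lg n)$ canonical pieces (independent of its $y$-span), rather than to every $y$-block its range crosses during a sweep. If you want to salvage your write-up, replace the sweep-plus-blocking with this hierarchical canonical decomposition on $x$ and a packed merge on $y$; that is where the amortization actually comes from.
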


The second subroutine is a preliminary method for the offline $d$-dimensional
orthogonal range reporting problem.
A straightforward $b$-ary version of the range tree, combined with
a trivial method for the 1-d base case, easily gives the following
bound, which with the right choice of $b$ will turn out to be crucial in
establishing our 4-d result:

\begin{lemma}
\label{lem:offstab}
There is an algorithm for offline $d$-dimensional orthogonal
range reporting on $n$ points and $m$ boxes that runs in time 
$O(n\log^{d-1}_b n+b^{d-1} m \log^{d-1}_b n+k)$, where $b\geq 2$ is a
parameter and $k$ is the total output size, if coordinates have been
pre-sorted.
\end{lemma}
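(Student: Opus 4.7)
\textbf{Proof plan for Lemma~\ref{lem:offstab}.} The plan is to implement a standard $b$-ary range tree on the first coordinate, reducing each query box to $O(b\log_b n)$ canonical $(d{-}1)$-dimensional subqueries, and to handle the one-dimensional base case by a linear-time sweep. For $d=1$, the pre-sorted points and intervals allow a simple left-to-right sweep in $O(n+m+k)$ time, matching the claimed bound. For $d\ge 2$, I would build a $b$-ary tree $\TT$ on the sorted first coordinate in $O(n)$ time with one input point per leaf. For each of the remaining $d-1$ coordinates, I would propagate sorted orders from parents to children in linear time per node, for a total of $O(n\log_b n)$ per coordinate, so that every node of $\TT$ receives pre-sorted lists of its points in the remaining coordinates.

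Next, for each query box I would compute the canonical decomposition of its first-coordinate interval by walking the two root-to-leaf paths of its endpoints, obtaining $O(b\log_b n)$ canonical subtree roots and generating one $(d{-}1)$-dimensional subquery per canonical node (the projection of the box onto the remaining coordinates). The projected boxes are propagated down along the two paths with their coordinates kept sorted, so that the $(d{-}1)$-dimensional subproblem at each canonical node again satisfies the pre-sorted hypothesis; this takes $O(b\log_b n)$ time per box and $O(bm\log_b n)$ in total. Writing $T(n,m,d)$ for the running time and grouping canonical nodes level by level, at each of the $\log_b n$ levels of $\TT$ the point counts $n_v$ sum to $n$ and the subquery counts $m_v$ sum to $O(bm)$, while $\log_b n_v\le\log_b n$. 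Applying the inductive bound $T(n',m',d-1) = O(n'\log_b^{d-2} n' + b^{d-2} m'\log_b^{d-2} n' + k')$ at each canonical node and summing gives per-level cost
\[
O\bigl(n\log_b^{d-2} n + b^{d-2}\cdot bm\cdot\log_b^{d-2} n + k_{\text{level}}\bigr) \;=\; O\bigl(n\log_b^{d-2} n + b^{d-1} m\log_b^{d-2} n + k_{\text{level}}\bigr).
\]
Summing over the $\log_b n$ levels telescopes the output terms to $k$ and yields the claimed $O(n\log_b^{d-1} n + b^{d-1} m\log_b^{d-1} n + k)$.

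The step requiring the most care is not algorithmic but a bookkeeping invariant: both the point lists and the (projected) box lists handed to each child must inherit sorted orders in all remaining coordinates without any additional sorting, since a single sort would inject a $\log n$ factor and break the bound. Standard range-tree propagation maintains this invariant in linear time per level per coordinate, which comfortably fits inside the claimed $O(n\log_b^{d-1} n)$ and $O(b^{d-1} m\log_b^{d-1} n)$ terms, so the recurrence closes as stated.
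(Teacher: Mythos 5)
Your proof is correct and follows exactly the approach the paper has in mind: a standard $b$-ary range tree reducing, level by level, to a 1-d sweep, with sorted orders propagated from parent to child to avoid injecting a $\log$ factor. The paper itself gives no proof at all (it calls the lemma ``straightforward''), and your recurrence $T(n,m,d)=O(n\log_b^{d-1}n+b^{d-1}m\log_b^{d-1}n+k)$, together with the observation that per level $\sum_v n_v=n$ and $\sum_v m_v=O(bm)$, closes cleanly.
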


\newcommand{\STAIR}{{\cal P}}
\newcommand{\VD}{{\cal VD}}
\newcommand{\D}{\Delta}
The main geometric tool we use is a randomized version of 
{\em shallow cuttings\/} in 3-d.  Let $S$ be a set of $n$ points in 3-d.
Pick a random sample $R$ where each point of $S$ is included independently
with probability $1/K$ for a fixed parameter $K$.  
Define the {\em staircase polyhedron\/}
$\STAIR(R)$ to be the lower envelope of the orthants 
$O_s=[x,\infty) \times [y,\infty) \times [z,\infty)$ over all the
points $s=(x,y,z)\in R$.  Note that the vertices of 
$\STAIR(R)$ include all the minimal points of $R$ (and possibly extra
points not in $R$); this orthogonal polyhedron $\STAIR(R)$ has $O(|R|)$
number of vertices (by Euler's formula) and 
can be computed in time $O(|R|\lg|R|)$  
by adapting a standard algorithm for 3-d minima~\cite{GaBeTa,PreShaBOOK}
(the time bound can be improved on the RAM).  
Let $\VD(R)$ denote the cells in the
\emph{vertical decomposition}
of the region underneath $\STAIR(R)$.  The decomposition is defined
as follows: take each horizontal face (a polygon) of $\STAIR(R)$
and form a 2-d vertical decomposition of the face by adding
$y$-vertical line segments at its vertices; finally, extend each resulting
subface (a rectangle) downward to form a cell touching $z=-\infty$.
The decomposition $\VD(R)$ has $O(|R|)$ size and can be computed in
$O(|R|)$ additional time.  For each cell $\D\in\VD(R)$, we define
its \emph{conflict list} $S_\D$ to consist of
all points $s\in S$ with $O_s$ intersecting $\D$; equivalently,
$S_\D$ consists of all points in $S$ that are dominated by the
top-upper-right corner $v_\D$ of $\D$.
The decomposition $\VD(R)$ and its conflict lists, which together
we refer to as a {\em randomized shallow cutting\/} of $S$, satisfy 
some desirable properties\SHORT{ (see the appendix for background on
deterministic shallow cuttings)}:

\begin{lemma}\label{lem:cut}
For a random sample $R$ of $S$ with $\Ex[|R|]=n/K$,
\begin{enumerate}
\item[\rm (a)] $\max_{\D\in\VD(R)}|S_\D| \:=\: O(K\lg N)$ with probability
at least $1-1/N$ for any $N\ge n$;
\item[\rm (b)] $\Ex\left[\sum_{\D\in\VD(R)}|S_\D|\right]\:=\: O(n)$;
\item[\rm (c)] if a query point $q$ dominates exactly $k$ points of $S$, then
$q$ is covered by $\VD(R)$ (i.e., lies below $\STAIR(R)$) with probability at 
least $1-k/K$.
\end{enumerate}
\end{lemma}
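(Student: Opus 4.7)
The plan is to apply the standard Clarkson--Shor random sampling framework to the 3-d staircase polyhedron and its vertical decomposition. The foundational structural observation I would first establish is that every cell $\Delta \in \VD(R)$ is specified by a constant-size \emph{defining set} $D_\Delta \subseteq R$: the top-upper-right vertex $v_\Delta$ is a vertex of $\STAIR(R)$ at which three orthants meet, and the remaining facets of the rectangular box $\Delta$ are supported by $O(1)$ additional points of $R$ (this is classical for vertical decompositions of orthogonal lower envelopes). Consequently each potential cell is determined by $O(1)$ points of $S$, and the total number of potential cells, taken over all possible samples $R \subseteq S$, is $n^{O(1)}$.

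Part (c) is immediate from the geometry: $\STAIR(R) = \bigcup_{r \in R} O_r$ and $q \in O_r$ iff $q$ dominates $r$, so $q$ lies below $\STAIR(R)$ iff none of the $k$ points of $S$ dominated by $q$ is sampled. Independence of the sample gives probability $(1 - 1/K)^k \geq 1 - k/K$.

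Part (a) follows from a Clarkson--Shor tail bound. A potential cell $\Delta$ actually appears in $\VD(R)$ only if $D_\Delta \subseteq R$ \emph{and} no point of $S_\Delta \setminus D_\Delta$ is sampled (otherwise the new sample point would raise the staircase above $v_\Delta$ and destroy $\Delta$). Hence, if $|S_\Delta| > cK\lg N$, the probability $\Delta$ arises is at most $(1-1/K)^{cK\lg N - O(1)} \leq N^{-c/2}$ for $c$ sufficiently large, and a union bound over the $n^{O(1)} \leq N^{O(1)}$ potential cells yields total failure probability at most $1/N$. For part (b), the standard Clarkson--Shor first-moment bound --- applicable precisely because defining sets have constant size --- gives $\Ex\left[\sum_\Delta |S_\Delta|\right] = O\bigl(K \cdot \Ex[|\VD(R)|]\bigr) = O(K \cdot n/K) = O(n)$, using that $|\VD(R)| = O(|R|)$ by Euler's formula; alternatively, one can integrate the tail bound from (a) over scales $tK$ for $t = 1,2,\dots$.

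The step demanding most care is the structural observation about defining sets: one must verify that each cell of the vertical decomposition is combinatorially determined by a constant-size subset of $R$, and crucially that the appearance of $\Delta$ in $\VD(R')$ for any $R' \supseteq D_\Delta$ depends only on whether $R' \cap S_\Delta = D_\Delta$. This ``locality'' property is what licenses the Clarkson--Shor machinery underlying both (a) and (b); once it is in place, the probabilistic bounds reduce to familiar calculations and part (c) is essentially a one-line computation.
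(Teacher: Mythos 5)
Your proof is correct and follows essentially the same route as the paper, which simply cites the Clarkson--Shor framework for (a) and (b) and gives the same one-line argument for (c). You have additionally spelled out the constant-size defining-set ("locality") observation for the cells of $\VD(R)$, which the paper treats as implicit in invoking Clarkson--Shor; that is precisely the technical hypothesis needed, and your identification of it as the step demanding care is apt.
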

\begin{proof}
(a) and (b) follow from the general probabilistic
results by Clarkson and Shor~\cite{ClaDCG87,ClaSho};
(c) is obvious by a union bound: if $q$ is above $\STAIR(R)$, then some 
point of $S$ dominated by $q$ must be chosen in $R$.
\end{proof}

\LONG{
Matou\v sek~\cite{MatCGTA92} provided a 
deterministic version of shallow cuttings
satisfying similar, slightly stronger properties (without the extra logarithmic
factor in (a) and with probability 1 in (c) for $k\le K$).
Originally, shallow cuttings were developed for halfspace range reporting
and defined in terms of arrangements of planes rather than orthants;
the first application to dominance range reporting
was proposed by Afshani~\cite{afshani:dominance}.  
(A similar concept specific to the case of dominance called
{\em $t$-approximate boundary} had also appeared
\cite{firstattempt,Nekrich.SOCG07}.)

For our offline problem, however, preprocessing cost matters and the
above simpler randomized version is more suitable than its deterministic
counterpart. Note that it is not advisable to use shallow cuttings in 4-d
directly, since the number of vertices in the staircase polyhedron $\STAIR(R)$
can be quadratic in $|R|$ in 4-d.
}

\subsection{Offline 3-d Dominance Reporting}
\label{sec:off3d}

We warm up by illustrating how randomized shallow cuttings can help
solve the offline dominance reporting problem in the 3-d case. 
\LONG{The derived solution playes a key role in our 4-d solution.}
We assume that
the given $n$ input points and $n$ query points have been pre-sorted.

\paragraph{Algorithm.}
We pick a random sample $R$ of the input points, where each point is
sampled with probability $1/K$ with $K:=\lg n$. We first compute $\STAIR(R)$ and
$\VD(R)$.

We next compute the conflict lists for all the cells of $\VD(R)$
as follows.  For each input point $s$, it suffices to identify
all cells whose conflict lists include $s$.  We first find the 
cell $\D\in\VD(R)$ containing $s$; this reduces
to a 2-d point location query in the $xy$-projection of $\VD(R)$.
The top-upper-right corner $v_\D$ of $\D$ gives us an initial vertex that
dominates $s$.
We observe that all vertices of the polyhedron $\STAIR(R)$
that dominate the point $s$ form a connected subgraph in the graph
(the 1-skeleton) induced by the polyhedron. Furthermore, the degree of 
each node in the graph is at most 3.
Thus we can perform a breadth-first search from the initial vertex
found to generate all vertices of $\STAIR(R)$
dominating $v$, yielding all conflict lists that include $v$.
The total time over all input points $v$, excluding the initial
point location queries, is
linear in the total size of all conflict lists.

For each query point $q$, we find the cell of $\VD(R)$ 
containing $q$; this again
reduces to a 2-d point location query in the projection of $\VD(R)$.
If no cell is found (i.e., $q$ is above $\STAIR(R)$), then
we say that $q$ is
\emph{bad}; otherwise it is \emph{good}.
For each cell $\D\in\VD(R)$, we run an existing algorithm $A_0$
to solve the offline 3-d dominance reporting subproblem for the input
points in the conflict list of $\D$ and the query points inside~$\D$.

This answers all good queries correctly.  To finish, we
recursively solve the offline 3-d dominance reporting
problem on the bad queries and all the input points,
where the roles of queries and input points are now reversed. Note that we also
reverse the dominance relation, or equivalently, negate all
coordinates.
After recursing twice, however, we terminate by switching to 
a known $O(n\lg n + k)$-time algorithm (e.g.,~\cite{LeePre,PreShaBOOK}).

\newcommand{\Qpl}{Q_{\mbox{\scriptsize\sc pl}}}

\paragraph{Analysis.}
Assume that the offline 2-d point location on $n$ 
pre-sorted rectangles and query points
takes $O(n \Qpl(n))$ time for some non-decreasing function $\Qpl(\cdot)$.
Assume that the initial algorithm $A_0$ for offline 3-d dominance
reporting on $n$ pre-sorted input and query points 
takes $O(nQ_0(n)+k)$ (expected) time for some non-decreasing function
$Q_0(\cdot)$.  
Note that this implies that the running time for
$n$ input points and $m$ query points is $O((n+m)Q_0(n)+k)$, by dividing the
query points into $\lceil m/n\rceil$ groups of size at most~$n$ when $m>n$.

Our algorithm
spends expected time at most $O((n/K)\lg n)=O(n)$ to compute $\STAIR(R)$ and $\VD(R)$.
Performing point locations 
on the $xy$-projection of $\VD(R)$ (a subdivision of expected size $O(n/K)$) for both the input and query points
takes time at most $O(n\Qpl(n))$.  
Observe
here that the input and query points have been pre-sorted,
and so the rectangles can also be pre-sorted in linear time.
Constructing the conflict lists by the breadth-first searches 
takes expected time $O(n)$ since they have expected size $O(n)$ 
by Lemma~\ref{lem:cut}(b). Also by Lemma~\ref{lem:cut}(a),
every conflict
list has size $O(K\lg n)$ w.h.p.; if this condition is violated,
we can afford to switch to a trivial polynomial upper bound on the running time.
Since the total expected size of the 3-d dominance reporting
subproblems at the cells is $O(n)$,
these subproblems can be solved in total expected time $O(nQ_0(O(K\lg n)))
=O(nQ_0(O(\lg^2n)))$ plus the output size.  

One technicality arises: by our assumption, the coordinates 
of the input and query points in each subproblem should be pre-sorted first.
For the $x$-coordinates, this can be accomplished by
scanning through the global sorted $x$-list, and for each
input or query point $s$ in order, appending $s$ to the end of
the linked lists for the cells $s$ participates in.
The $y$- and $z$-sorted lists can be similarly dealt with.  The time required
is linear.

By Lemma~\ref{lem:cut}(c), the probability that a query with 
output size $k_i$ is bad is at most $k_i/K$.  Thus, the expected
number of bad queries is at most $k/K$ for total output size $k$.
After recursing twice, the expected number of
queries and input points both decrease to $O(k/K)$.
The $O(n\lg n+k)$ algorithm would then finish in 
expected time $O((k/K)\lg n +k)=O(k)$.
We conclude that
our algorithm runs in overall expected time $O(n[\Qpl(n)+Q_0(O(\lg^2 n))]+k)$.

\IGNORE{
 When $k=\Omega(n\lg n)$ this is not a problem,
so assume $k=o(n\lg n)$. In this case, there are at least $3n/4$
queries reporting at most $4 \lg n$ points each. Define such a query
to be \emph{shallow}. There must be at least $n/4$ bad shallow queries. 
By Lemma~\ref{lem:cut}(c) (with $\alpha=(4\lg n)/K$), the
probability that a shallow query is bad is at most $(4\lg n)/K$. 
Therefore, the expected number of bad shallow queries is
at most $(4n\lg n)/K$. By Markov's inequality, the probability there
are at least $n/4$ bad shallow queries is bounded by $O((\lg
n)/K)$. This case thus contributes $O(n\lg n \cdot (\lg
n)/K)=O(n)$ to the expected running time of our algorithm.

For the recursive calls, observe that after two calls, the number of
queries and input points have both been halved, thus the time spent at
each recursive level is geometrically decreasing. 
}

For example, we can use $\Qpl(n)=O(\lg\lg n)$ by
the point location method from~\cite{chan_pps} 
(actually in the offline setting, we can just use a plane sweep
algorithm with a dynamic van Emde Boas trees), and $Q_0(n)=O(\lg n)$ by
a known method for 3-d offline dominance reporting~\cite{GupSCG95}. 
Then our algorithm would run in
expected time $O(n\lg\lg n + k)$.  We now show that an even better result is
possible when $n$ is small.

\paragraph{The case of few points.}
First consider the case $n\le w^{O(1)}$. By Lemma~\ref{lem:offpl}, 
$\Qpl(n)=O(1)$.  
We can solve 3-d dominance reporting for $O(\lg^2n)=o(w/\lg w)$
points in linear time, since after rank space reduction, 
the input can be packed into $o(w)$ bits and the answer can
deduced from a word operation.  
Thus, $Q_0(O(\lg^2 n))=O(1)$.
We therefore get an $O(n+k)$-time algorithm.

Next consider the case $n\le 2^{O(\sqrt{w})}$. 
By Lemma~\ref{lem:offpl}, $\Qpl(n)=O(1)$. 
Since $\lg^2 n \le w^{O(1)}$,
by bootstrapping with the first case, we can set $Q_0(O(\lg^2 n))=O(1)$.
We therefore get an $O(n+k)$-time algorithm.

\begin{theorem}
\label{thm:off3d} 
There is an algorithm for offline 3-d dominance reporting on $n$
input points and $n$ query points that runs in expected time
$O(n\lg\lg n + k)$ if the coordinates have been pre-sorted.
The time bound improves to 
$O(n+k)$ if in addition, $n\le 2^{O(\sqrt{w})}$.
\end{theorem}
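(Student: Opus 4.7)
The plan is to solve the problem via a randomized shallow cutting together with a geometric divide-and-conquer that recurses on the ``bad'' points. Specifically, I would sample each point of the input set $S$ independently with probability $1/K$ where $K:=\lg n$, compute the staircase polyhedron $\STAIR(R)$ and its vertical decomposition $\VD(R)$ in expected time $O((n/K)\lg n)=O(n)$, and then build the conflict list $S_\D$ of every cell $\D\in\VD(R)$ by first locating the cell of each input point via offline 2-d orthogonal point location on the $xy$-projection of $\VD(R)$, and then running breadth-first search on the 1-skeleton of $\STAIR(R)$ starting from the corresponding top-upper-right vertex $v_\D$. By Lemma~\ref{lem:cut}(a,b), every conflict list has size $O(K\lg n)=O(\lg^2 n)$ w.h.p.\ and their sizes sum to $O(n)$ in expectation, so this construction is affordable.

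Next I would locate every query point in $\VD(R)$ (again by offline 2-d point location), marking it \emph{good} if it lands in a cell $\D$ and \emph{bad} otherwise. Good queries are dispatched by calling an off-the-shelf offline 3-d dominance reporting routine $A_0$ on $(S_\D,\text{queries in }\D)$, where the coordinates inside each cell can be pre-sorted globally in $O(n)$ time by sweeping the three master sorted lists and appending each point to the appropriate cell's linked list. Bad queries are handled by recursing on the role-reversed instance (swap input with queries and negate coordinates, which reverses dominance); by Lemma~\ref{lem:cut}(c) the expected number of bad queries is at most $k/K$, so after two recursive levels both sides shrink to $O(k/K)$ and I would terminate by invoking a standard $O(n\lg n+k)$ algorithm~\cite{LeePre} to finish the leftovers in $O(k)$ expected time.

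Aggregating the costs, if offline 2-d point location costs $O(n\,\Qpl(n))$ and $A_0$ costs $O(nQ_0(n)+k)$, the overall expected running time is $O(n[\Qpl(n)+Q_0(O(\lg^2 n))]+k)$. Using a plane sweep with a dynamic van~Emde~Boas tree gives $\Qpl(n)=O(\lg\lg n)$, and using any known offline $O(n\lg n+k)$ algorithm gives $Q_0(n)=O(\lg n)$, which yields the main $O(n\lg\lg n+k)$ bound.

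The main obstacle is getting the improved $O(n+k)$ bound in the small-$n$ regime, and I would handle it by bootstrapping in two stages. When $n\le w^{O(1)}$, Lemma~\ref{lem:offpl} gives $\Qpl(n)=O(1)$, and since the recursive subproblems have size $O(\lg^2 n)=o(w/\lg w)$ they can be packed into a single word after rank-space reduction and resolved by one table lookup, so $Q_0(O(\lg^2 n))=O(1)$ and the whole algorithm runs in $O(n+k)$ time. Bootstrapping to $n\le 2^{O(\sqrt{w})}$, Lemma~\ref{lem:offpl} still gives $\Qpl(n)=O(1)$, and because $\lg^2 n\le w^{O(1)}$ I may invoke the previous small-$n$ solution for $A_0$ itself, obtaining $Q_0(O(\lg^2 n))=O(1)$ and hence the claimed $O(n+k)$ expected time.
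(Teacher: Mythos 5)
Your proposal is correct and follows essentially the same route as the paper: the same randomized shallow-cutting construction with $K=\lg n$, the same conflict-list building via offline 2-d point location plus BFS on the $1$-skeleton, the same good/bad dichotomy with two rounds of role-reversed recursion, the same aggregate cost formula $O(n[\Qpl(n)+Q_0(O(\lg^2 n))]+k)$, and the same two-stage bootstrapping ($n\le w^{O(1)}$ via word packing, then $n\le 2^{O(\sqrt{w})}$) to get the $O(n+k)$ bound. No substantive differences.
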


\subsection{Offline 4-d Dominance Reporting}
\label{sec:off4d}

We are now ready to present our offline 4-d algorithm.
Our algorithm follows the same basic approach employed by most data
structural upper bounds for orthogonal range searching: we construct a
range tree on the input points and solve an offline 3-d problem in
each node of the tree.  Naively, using the $O(n\lg\lg n+k)$ 
algorithm from Theorem~\ref{thm:off3d} would imply
only an $O(n\lg n\lg\lg n + k)$ algorithm (which nevertheless is
an improvement over previous results).   We need several additional
ideas to achieve the final $O(n\lg n + k)$ result.

\paragraph{Algorithm.}
Construct a complete binary tree (range tree) $\TT$ using the input
points ordered by their last coordinate as leaves. Associate each
query point to the leaf node containing its successor input point
w.r.t.\ the last coordinate, and project all input and query points on to
the first three dimensions. Each internal node $u$ in $\TT$ naturally
defines an offline 3-d dominance reporting problem, using the query
points in the right subtree as queries (the query points of $u$), and
the input points in the left subtree as input (the input points of
$u$). Clearly the combined output of all these 3-d problems constitutes
the output for the 4-d problem.

To speed up the solution of these 3-d problems, our first idea is to
use randomized shallow cuttings once again, but this time with a
different choice of parameter $K$.  Pick a random sample of all the
$n$ input points, where each point is included with probability $1/K$
with $K := 2^{\sqrt{w}}$.  For each node $u$ in $\TT$, let $R_u$
denote the sample of the input points of $u$.  We first compute
$\STAIR(R_u)$ and $\VD(R_u)$.  We next compute the conflict lists for
all the cells of $\VD(R_u)$ as in Section~\ref{sec:off3d}%
\LONG{: namely, we
find the cell of $\VD(R_u)$ containing each input point of $u$ by point
location, and then use breadth-first searches to generate the conflict
lists in time linear in their total size.
}\SHORT{
by doing point location in $\VD(R_u)$ for each input point of $u$
and using breadth-first searches.
}
For each query point $q$ of
$u$, we find the cell of $\VD(R_u)$ containing $q$ by point location.
If for a query $q$, there is at least one ancestor node where $q$ is a
query point and no cell is found, we say that $q$ is \emph{bad}.  For
each cell $\D\in\VD(R_u)$, we run the algorithm from
Section~\ref{sec:off3d} to solve the offline 3-d dominance reporting
subproblem for the input points of $u$ in the conflict list of $\D$
and the query points of $u$ inside $\D$ which are not bad.

This answers all queries that are not bad in any node. To finish, we
recursively solve the offline 4-d dominance reporting problem on query
points that are bad in at least one node and all the input points,
where the roles of queries and input points are now reversed.  After
recursing twice, we terminate by switching to a known $O(n\lg^2 n +
k)$-time algorithm (e.g.,~\cite{LeePre,PreShaBOOK,VaiWoo}).

\paragraph{Analysis, excluding point location.}
Our algorithm
spends expected time at most $O((n/K)\lg n)=o(n)$
to compute $\STAIR(R_u)$ and $\VD(R_u)$ per level of the tree.
The breadth-first searches 
take expected time $O(n)$ per level. By Lemma~\ref{lem:cut}(a),
every conflict
list has size $O(K\lg n)=2^{O(\sqrt{w})}$ w.h.p.; if this condition is violated at any node,
we can afford to switch to a trivial polynomial upper bound on the running time.
Since the total expected size of the 3-d dominance reporting
subproblems at the cells is $O(n)$,
these subproblems can be solved in total expected time $O(n)$ per level, plus
the output size, by applying Theorem~\ref{thm:off3d} in the ``few points'' case.  
One technicality arises: the coordinates 
of the input and query points in each subproblem should be pre-sorted first.
As in Section~\ref{sec:off3d}, this can be accomplished by
scanning through the global sorted $x$-, $y$-, and $z$-lists in linear time.
The total time excluding point location cost is then
$O(n)$ per level, i.e., $O(n\log n)$, plus the output size.

By Lemma~\ref{lem:cut}(c), the probability that a query with output
size $k_i$ is bad at one or more nodes is at most $k_i/K$.  Thus, the
expected total number of bad queries at all nodes is at most $k/K$ for
total output size $k$.  After recursing twice, the expected number of
queries and input points both decrease to $O(k/K)$.  The $O(n\lg^2
n+k)$ algorithm would then finish in expected time $O((k/K)\lg^2 n
+k)=O(k)$.

\paragraph{Point location cost.}
At each node $u$, we need to perform point locations on 
the $xy$-projection of $\VD(R_u)$ for
all input points and query points of $u$.
Unfortunately, the current best offline 2-d orthogonal point location algorithm in general
requires $O(\lg\lg n)$ time per query, which would result in suboptimal
total time $O(n\lg n\lg\lg n)$.  We suggest the following key idea: solve
all the 2-d point location subproblems collectively, by transforming them
into one single 3-d problem!

Specifically, consider point locations for the query points of $u$;
locations of the input points of $u$ can be dealt with similarly.
The query points for which we must perform a
point location in $\VD(R_u)$ are precisely those in the right subtree of
$u$. These queries lie in a consecutive range of leaves, say
$\ell_i$ through $\ell_j$, counted from left to right. We now
transform each rectangle $r=[x_1,x_2] \times [y_1,y_2]$ in the
$xy$-projection of $\VD(R_u)$ into
the 3-d rectangle $r'=[x_1,x_2] \times [y_1,y_2] \times [i,j]$ and
collect the set $B$ of all such 3-d boxes over all nodes in
$\TT$. Similarly, we transform each query point $q$ to another 3-d query
point $q'$. If $q$ has coordinates $(x,y,z)$ and lies in leaf
$\ell_i$, we map $q$ to the point $q'=(x,y,i)$. We then collect the
set $A$ of all transformed query points, and solve an offline 3-d
orthogonal range reporting problem with the points in $A$ and the boxes in $B$.
From the output of this offline problem,
we can obtain for each query point in $A$ the set of boxes in $B$ that contain
it. This gives the answers to all the original 2-d point location queries.

Since the subdivisions $\VD(R_u)$ have total expected size $O(n/K)$ per level
of the tree, the expected number of boxes in $B$ is $O((n/K)\lg n)$.  
On the other hand, the number of points in $A$ is $n$, and 
the total output size of the 3-d problem is $O(n\lg n)$,
since each point in $A$ lies in $O(\lg n)$ boxes in $B$.
By applying Lemma~\ref{lem:offstab} with $b=K^\eps$, we can solve the offline
3-d orthogonal range reporting problem in expected time
$$O(n\log_b^2 n + b^2 (n/K)\lg n\log_b^2 n + n\lg n)\:=\:
O(n(\lg n/\lg K)^2 + n\lg n)\:=\:O(n\lg n),$$
due to the fortuitous choice of $K=2^{\sqrt{w}}$.
We finally conclude

\begin{theorem}\label{thm:off4d}
There is an algorithm for offline 4-d dominance reporting on $n$
input points and $n$ query points that runs in expected time $O(n \lg
n+k)$, where $k$ is the total output size.
\end{theorem}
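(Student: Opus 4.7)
The plan is to attack the 4-d problem through a standard range tree $\TT$ over the input points sorted by their fourth coordinate: attach each query point to the leaf of its successor in that coordinate, project everything to the first three dimensions, and observe that every internal node $u$ defines an offline 3-d dominance reporting subproblem whose inputs are the points in $u$'s left subtree and whose queries are the points in $u$'s right subtree. The union of the outputs of these subproblems is precisely the 4-d output. Feeding these subproblems into Theorem~\ref{thm:off3d} as a black box would only give $O(n\lg n\lg\lg n+k)$, so the real work is to shave the extra $\lg\lg n$ factor.

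To do that, I would invoke Lemma~\ref{lem:cut} inside every node $u$, this time with sampling probability $1/K$ for $K:=2^{\sqrt w}$, producing a shallow cutting $\VD(R_u)$ whose conflict lists each have size $O(K\lg n)=2^{O(\sqrt w)}$ w.h.p.\ and total expected size $O(n)$ per level of $\TT$. Since each cell subproblem has size $2^{O(\sqrt w)}$, I can invoke the ``few points'' case of Theorem~\ref{thm:off3d}, which is linear rather than $O(\lg\lg n)$ per query, yielding $O(n)$ time per level plus the output, hence $O(n\lg n+k)$ total from this step. Queries that lie above $\STAIR(R_u)$ at some ancestor $u$ (which I will call \emph{bad}) are deferred; Lemma~\ref{lem:cut}(c) and a union bound give an expected total of $O(k/K)$ bad queries, so I can recurse with the roles of inputs and queries swapped and, after two rounds, switch to a standard $O(n\lg^2 n+k)$ algorithm at negligible expected cost.

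The hard part will be the $\Theta(n)$ two-dimensional point location queries per level needed to route each input or query point into the correct cell of $\VD(R_u)$: using Lemma~\ref{lem:offpl} or any off-the-shelf $O(\lg\lg n)$-time structure independently at each $u$ would reintroduce a $\lg\lg n$ factor. My plan is to batch all of these 2-d point location subproblems across all nodes of $\TT$ into a single offline 3-d orthogonal range reporting instance. For a node $u$ whose right subtree occupies the leaf range $[i,j]$, lift every rectangle $[x_1,x_2]\times[y_1,y_2]$ of the $xy$-projection of $\VD(R_u)$ to the 3-d box $[x_1,x_2]\times[y_1,y_2]\times[i,j]$, and lift each query point $(x,y,z)$ sitting at leaf $\ell_i$ to $(x,y,i)$ (symmetrically for input points). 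A lifted point then lies in a lifted box iff the original 2-d point lies in the original 2-d rectangle \emph{and} $u$ is the right kind of ancestor, so the batched 3-d answer reveals all desired point locations at once.

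To finish, I would bound the batched problem using Lemma~\ref{lem:offstab} with $d=3$, $n$ lifted points, $m=O((n/K)\lg n)$ lifted boxes (the expected total size of the $\VD(R_u)$ across all levels), total output $O(n\lg n)$ (each point lies in $O(\lg n)$ boxes, one per ancestor), and $b:=K^\eps$. The resulting bound is
\[ O\bigl(n\log_b^2 n + b^2 m\log_b^2 n + n\lg n\bigr)\ =\ O\bigl(n(\lg n/\sqrt w)^2 + n\lg n\bigr)\ =\ O(n\lg n), \]
where the cancellation hinges on the choice $K=2^{\sqrt w}$, so that $\lg K=\sqrt w$ kills one factor of $\lg n$. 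Combining this point-location cost with the $O(n\lg n+k)$ from the 3-d subproblems and the geometrically decreasing recursion on bad queries yields the claimed $O(n\lg n+k)$ expected total time.
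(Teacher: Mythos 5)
Your proposal is essentially identical to the paper's proof: the same range tree over the fourth coordinate, the same choice $K=2^{\sqrt{w}}$ so that conflict lists fit the ``few points'' regime of Theorem~\ref{thm:off3d}, the same definition and recursive handling of bad queries, and --- crucially --- the same trick of batching all per-node 2-d point locations into a single offline 3-d range reporting instance via the lift $(x,y,z)\mapsto(x,y,i)$ and $[x_1,x_2]\times[y_1,y_2]\mapsto[x_1,x_2]\times[y_1,y_2]\times[i,j]$, analyzed by Lemma~\ref{lem:offstab} with $b=K^\eps$. The only (minor) item the paper spells out that you elide is the linear-time re-derivation of sorted coordinate lists for each cell subproblem by scanning global sorted lists, but this does not change the argument.
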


\newcommand{\OFFLINEAPPENDIX}{
\subsection{Remarks\SHORT{ on Section~\ref{sec:offline}}}
\label{sec:offline:rmks}

\paragraph{Background on shallow cuttings.}
Matou\v sek~\cite{MatCGTA92} provided a 
deterministic version of shallow cuttings
satisfying similar, slightly stronger properties as
in Lemma~\ref{lem:cut} (without the extra logarithmic
factor in (a) and with probability 1 in (c) for $k\le K$).
Originally, shallow cuttings were developed for halfspace range reporting
and defined in terms of arrangements of planes rather than orthants;
the first application to dominance range reporting
was proposed by Afshani~\cite{afshani:dominance}.  
For our offline problem, however, preprocessing cost matters and the
above simpler randomized version is more suitable than its deterministic
counterpart. Note that it is not advisable to use shallow cuttings in 4-d
directly, since the number of vertices in the staircase polyhedron $\STAIR(R)$
can be quadratic in $|R|$ in 4-d.

\paragraph{Higher dimensions and applications.}

The $d$-dimensional problem reduces to the $(d-1)$-dimensional problem
at the expense of a logarithmic factor increase, by standard divide-and-conquer.
Theorem~\ref{thm:off4d} thus implies:

\begin{corollary}
There is an algorithm for offline $d$-dimensional dominance reporting on $n$
input points and $n$ query points that runs in expected time $O(n \lg^{d-3}
n+k)$ for any constant $d\ge 4$, where $k$ is the total output size.  
\end{corollary}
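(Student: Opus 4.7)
The plan is to prove the corollary by induction on the dimension $d$, using the classical divide-and-conquer reduction from offline $d$-dimensional dominance reporting to offline $(d-1)$-dimensional dominance reporting, with the base case $d=4$ supplied directly by Theorem~\ref{thm:off4d}. I would assume throughout that the input and query points have been pre-sorted along every coordinate axis, which costs $O(n\lg n)$ once up front and is absorbed in the final bound for all $d\ge 4$.

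For the inductive step, I would split the combined set of $2n$ points by the median of the $d$-th coordinate into a lower half $L$ and an upper half $U$. Any dominance pair $(p,q)$ with $p$ an input point and $q$ a query point falls into exactly one of three classes: both in $L$, both in $U$, or $p\in L$ and $q\in U$. The first two classes are handled by recursing on two $d$-dimensional subproblems of size $n/2$. For the cross class, the $d$-th coordinate inequality is automatic, so projecting onto the first $d-1$ coordinates reduces it to a single $(d-1)$-dimensional offline dominance reporting problem on at most $n$ input points and $n$ queries, to which the inductive hypothesis applies. This yields the recurrence
\[
T_d(n,k)\ =\ 2T_d(n/2,\,k_L+k_U) + T_{d-1}(n,\,k_\times) + O(n),
\]
where $k_L+k_U+k_\times=k$. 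Substituting $T_{d-1}(n,k)=O(n\lg^{d-4}n+k)$ and unrolling gives $T_d(n,k)=O(n\lg^{d-3}n+k)$, since the $O(n\lg^{d-4}n)$ work per level sums over $\lg n$ levels to $O(n\lg^{d-3}n)$, while the output contributions telescope to $O(k)$ in total because every pair is reported in exactly one subproblem.

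A minor technicality to check is the maintenance of pre-sorting across recursive calls. Because the median along the $d$-th axis is a linear-time operation given pre-sorted data, each of the remaining $d-1$ sorted coordinate lists can be split into two sorted sublists (one for $L$, one for $U$) by a single scan, so every recursive call and every $(d-1)$-dimensional sub-call inherits all of its sorted lists at an $O(n)$ per-level cost already absorbed in the additive term of the recurrence. No step requires coordinates beyond what is already available.

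I do not anticipate any serious obstacle here; the reduction is entirely standard, and the only thing to verify with care is that the output-dependent part of the running time is charged exactly once globally rather than at every level of the recursion. This follows from the disjointness of the three classes of dominance pairs at each split, which ensures that $k_L+k_U+k_\times=k$ and hence that unrolling the recurrence contributes $k$ only at the leaves.
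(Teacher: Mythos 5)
Your proof is correct and is, in substance, exactly what the paper intends: the paper disposes of this corollary in a single sentence (``The $d$-dimensional problem reduces to the $(d-1)$-dimensional problem at the expense of a logarithmic factor increase, by standard divide-and-conquer''), and you have simply spelled out that standard range-tree reduction, including the propagation of pre-sorted lists and the observation that each dominance pair is reported at exactly one node of the recursion so that the output term is charged only once. The only cosmetic point worth flagging is that the sentence ``contributes $k$ only at the leaves'' is slightly misleading---output is produced at the $(d-1)$-dimensional cross-calls at internal nodes---but the accounting $\sum_v k_v = k$ that you actually use is correct.
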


Our method also works for offline dominance emptiness; the same
bounds hold with $k$ set to 0.
In fact, the algorithms can be slightly simplified: a query point
that dominates no input points is good with probability~1 by 
Lemma~\ref{lem:cut}(c), and so there is no need to recurse on the
bad queries.  

The problem of reporting enclosure pairs for $d$-dimensional boxes immediately
reduces to $(2d)$-dimensional dominance reporting.
The $d$-dimensional maxima problem obviously reduces to answering 
$n$ $d$-dimensional offline dominance emptiness queries.

For a less obvious application, consider the computation of the
{\em $L_\infty$-minimum spanning tree\/} of $n$ points.
A reduction by Krznaric, Levcopoulos, and Nilsson~\cite{KrLeNi}
showed that this problem can be reduced to
the {\em bichromatic $L_\infty$-closest pair\/} problem:
given $n$ red points and $n$ blue points, find a pair of red and
blue points with the smallest $L_\infty$-distance. 
Their reduction does not increase the asymptotic running time, 
if it exceeds $n\lg n$.
A randomized optimization technique by Chan~\cite{ChaSCG98}
showed that the problem can be further reduced to
the following decision problem, without increasing the
asymptotic expected running time:
given $n$ red points and $n$ blue points and a value $r$,
decide whether the $L_\infty$-distance is at most $r$.
By drawing hypercubes centered at the blue points of side length $2r$,
this problem in turn is equivalent to deciding whether some blue hypercube
contains some red point.  
Build a grid of side length $r$.
We can assign points and hypercubes
to grid cells via hashing in linear expected time.
Inside each cell,
the blue hypercubes are $d$-sided.  So, the problem reduces
to a collection of offline dominance emptiness subproblems with
linear total size.

\begin{corollary}
The maxima problem, the bichromatic $L_\infty$-closest pair
problem, and the $L_\infty$-minimum spanning tree problem
in any constant dimension $d\ge 4$ 
can be solved in $O(n\lg^{d-3}n)$ expected time.
\end{corollary}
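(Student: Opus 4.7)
The plan is to reduce each of the three problems to offline $d$-dimensional dominance emptiness, and then invoke the preceding corollary (with $k=0$) to solve the resulting offline emptiness instances in $O(n\lg^{d-3}n)$ expected time.

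The maxima reduction is immediate: a point $p$ is maximal iff no other input point strictly dominates it, so the problem is just $n$ offline $d$-dimensional dominance emptiness queries, one per input point. For the $L_\infty$-minimum spanning tree I would first apply the reduction of Krznaric, Levcopoulos, and Nilsson~\cite{KrLeNi} to bichromatic $L_\infty$-closest pair, which preserves asymptotic running time whenever the target bound is $\Omega(n\lg n)$. For bichromatic $L_\infty$-closest pair I would then apply Chan's randomized optimization technique~\cite{ChaSCG98} to reduce to the decision problem: given a threshold $r$, decide whether some red--blue pair has $L_\infty$-distance at most~$r$. This reduction also preserves asymptotic expected running time when the decision bound is $\Omega(n\lg n)$.

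The remaining and main content of the proof is the decision problem. Around each blue point $b$ I would erect an axis-aligned hypercube $H_b$ of side $2r$ centered at~$b$; the decision answer is \emph{yes} iff some red point lies in some $H_b$. I would impose a uniform grid of side length~$r$, and use hashing to bucket red points and blue hypercubes into grid cells in $O(n)$ expected time. Because $H_b$ has diameter $2r$ and the cell has diameter $r$, each $H_b$ overlaps only $O(2^d)$ cells, so the total ``problem size'' across all cells is $O(n)$. Inside a fixed cell $C$, for each coordinate axis at most one of the two opposite faces of $H_b$ can actually fall inside $C$ (since if $b$'s $i$-th coordinate is less than the left wall of $C$ then the right face of $H_b$ is past the right wall of $C$, and symmetrically), so $H_b\cap C$ is bounded in at most $d$ out of $2d$ sides. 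Hence, after appropriate coordinate reflections, checking whether any red point in $C$ lies in $H_b\cap C$ is a $d$-dimensional dominance (orthant) emptiness query. Summing over all cells gives an offline $d$-dimensional dominance emptiness instance of total size $O(n)$, solvable in $O(n\lg^{d-3}n)$ expected time by the preceding corollary.

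The main obstacle is the last geometric step — verifying that $H_b\cap C$ really is $d$-sided so that it reduces to dominance rather than to a more general orthogonal range query; this is where the choice of grid side length $r$ versus hypercube side $2r$ is used. Once this is in place, combining the three reductions and the corollary immediately gives the stated $O(n\lg^{d-3}n)$ expected running time for all three problems in every constant dimension $d\ge 4$.
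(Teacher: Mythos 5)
Your proposal follows the paper's proof essentially verbatim: maxima reduces directly to $n$ offline dominance emptiness queries; $L_\infty$-MST reduces to bichromatic $L_\infty$-closest pair via Krznaric--Levcopoulos--Nilsson; that in turn reduces to a decision problem via Chan's randomized optimization; and the decision problem is handled by hashing points and hypercubes into a grid of side $r$ so that within each cell the hypercubes become $d$-sided (orthant) queries of linear total size. Your added justification that each $2r$-side hypercube contributes at most one face constraint per axis inside an $r$-side cell is a correct elaboration of the paper's one-line claim (though the parenthetical ``right face is past the right wall'' should read that the \emph{left} face is past the \emph{left} wall when $b_i<c_i$).
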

}
\LONG{\OFFLINEAPPENDIX}

{\small
\bibliographystyle{abbrv}
\bibliography{ors}
}

\appendix

\section{Appendix}

\SHORT{\THREEDAPPENDIX\OFFLINEAPPENDIX}

\subsection{Succinct Rank Queries (Proof of Lemma~\ref{lem:succinct})}
The proof is rather standard. We will store a ``checkpoint'' once
every $\Sigma \lg n$ positions in the array: a record with $\Sigma$
entries (of $\lg n$ bits each) that indicates how many elements of
each kind we have prior to that position. Then, for each element in
the array, we can simply write $A[i]$ and the number of elements equal
to $A[i]$ between the last checkpoint and $i$. This uses $O(\lg
(\Sigma\lg n))$ bits per element, so it fits our space bound if
$\Sigma \ge \sqrt{\lg n}$. The query simply adds the counter stored
with $A[i]$ and the appropriate counter from the last checkpoint.

If the alphabet is smaller, we employ a 2-level scheme. We store a
checkpoint as above every $\Sigma \lg n$ positions. Additionally, every
$\Sigma \lg\lg n$ positions, we store a minor checkpoint: a record of
$\Sigma \lg (\Sigma \lg n) = O(\Sigma \lg\lg n)$ bits which indicates
the number of elements of each kind from the last checkpoint to the
minor checkpoint. A query retrieves the appropriate counters from the
last checkpoint and the last minor checkpoint, and then must solve the
rank problem between the last checkpoint and the query position. Since
$\Sigma \lg\lg n \le \sqrt{\lg n} \cdot \lg\lg n$, the array entries
between minor checkpoints fit in $O(\sqrt{\lg n} \cdot \lg^2 \lg n)$
bits. Thus, we can simply store the array entries in plain form, and
use a precomputed table of space $n^{o(1)}$ to answer rank queries
between minor checkpoints in constant time.

\subsection{Offline 2-d Orthogonal Point Location for Few Points (Proof of Lemma~\ref{lem:offpl})}

The proof follows the bit-packing approach of Chan and \Patrascu~\cite{ChaPatSODA10}.
First scan through the sorted input lists to reduce all coordinates to rank
space, 
that is, every coordinate of a query and
rectangle is an integer of value $O(n)$. 
\IGNORE{
Note that by using integer
coordinates of value at most $4n$, we can easily preserve the
containment relation. If $m=\Omega(n^2)$, we furthermore pick one
representative query point for each possible location of a query in
$[4n]^2$. We solve the point location problem on the set of $O(\min
\{m,n^2\})$ representative queries and the set of $n$ rectangles. From
the output of this smaller problem, we immediately get the output to
the full problem since queries having the same coordinates lie in the
same rectangle. The purpose of selecting representatives is that we
can give each representative a unique id of $O(\lg n)$ bits, instead
of $O(\lg m)$ bits. Now that each query has an id of $O(\lg n)$ bits,
and all coordinates are represented by $O(\lg n)$ bits, we create two
lists $R$ and $Q$ containing the rectangles and the representative
queries respectively. These lists are sorted by $y$-coordinates. In
the representation of $R$ and $Q$, we have packed each group of
$O(w/\lg n)$ consecutive elements into one word, and in $R$, the
rectangle are ordered by their bottom side.
}

We reduce our problem to a number of 1-d disjoint-intervals stabbing
problems (given a set of points and disjoint intervals, return for
each point the interval containing it if the interval exists). 
Essentially we construct a segment tree on the
rectangles (where we divide according to $x$-coordinates) and solve a 1-d problem in each node: Consider a trie of
depth $O(\lg n)$ over the binary alphabet. For each rectangle
$r=[x_1,x_2] \times [y_1,y_2]$, let $\ell_1$ and $\ell_2$ denote
the leaves corresponding to the binary representation of $x_1$ and
$x_2$. Now consider the two paths from these leaves to their lowest
common ancestor. For each node $u$ on the path to $\ell_1$ where
$\ell_1$ lies in the left child's subtree, associate the interval
$[y_1,y_2]$ to the right child. For $\ell_2$, do the same, but with
the roles of left and right reversed. Observe that the $y$-intervals
associated with each node in the trie are disjoint, by the
disjointness of the rectangles. Our first task is to compute
for each node in the trie, a sorted list of the associated $y$-intervals,
where the list has been packed into words to allow $O(w/\lg n)$
consecutive intervals to be stored in one word.

We construct these lists essentially by external-memory radix
sorting. We start at the root node where we are given the complete
input set $S$ in sorted order of bottom $y$-coordinates. We scan over
this list and distribute the rectangles to two sets, $S_\ell$ and
$S_r$, one for the left child of the root, and one for the
right. These lists are again packed into words. The set $S_\ell$
contains those rectangles $[x_1,x_2] \times [y_1,y_2]$ for which the
left child lies on the path from the root to either of the leaves
corresponding to the binary representation of $x_1$ or
$x_2$. The set $S_r$ is similar. Observe that this can be determined
directly from the binary representation of $x_1$ and
$x_2$. Furthermore, if $[x_1,x_2]$ completely contains the range of
$x$-coordinates stored in the leaves associated with either the left
or the right subtree, we append $[y_1,y_2]$ to a list stored for the
root of that subtree. These lists are also packed into words. Finally,
we recurse on the left and right subtree, using $S_\ell$ and $S_r$ as
input respectively.

Since the rectangles in $S$ are given in sorted $y$-order and 
the $y$-intervals associated with each node are
disjoint, this correctly constructs the desired lists. Furthermore,
observe that we can handle all $O(w/\lg n)$ rectangles stored in one
word in $O(1)$ time using table lookups. Thus we spend $O((n\lg n)/w)$
time on each of $O(\lg n)$ levels of the trie to construct the desired
lists; the total time is $O((n\lg^2n)/w)$.

We now associate each query $(x,y)$ to the set of nodes on the path
from the root to the leaf corresponding to the binary representation
of $x$. Using the same approach as for the rectangles, we obtain a
$y$-sorted list of the associated queries in each node of the trie in
total time $O((n\lg^2n)/w)$. To finish, we solve the 1-d disjoint-intervals 
stabbing problem in each node of the trie by scanning the
list of associated intervals and the list of associated queries
simultaneously (in order of $y$-coordinates). Note that we produce
output only when an interval contains a query point. Using table
lookups when performing the scan, we may advance at least one word in
one of the lists in $O(1+k')$ time, where $k'$ denotes the output size
between the queries and the intervals in the two considered
words. Over the entire trie, the total output size is $O(n)$
and the total cost is $O((n\lg^2n)/w+n)=O(n)$ for $n\le 2^{O(\sqrt{w})}$.

\subsection{An Alternative Algorithm for a Special Case of 4-d Offline
Dominance Emptiness}

In this subsection, we give an alternative 
\emph{deterministic} $O(n\lg n)$-time algorithm for a special case
of 4-d offline dominance emptiness: given $n$ red points and
$n$ blue points, decide whether there exists a red point $p$
and a blue point $q$ such that $p$ is dominated by $q$.
The original 4-d offline dominance emptiness problem is stronger:
there, we want to know for every blue point $q$ whether there exists
a red point dominated by $q$.  This special case is sufficient,
for example, to solve the bichromatic $L_\infty$-closest pair 
and the $L_\infty$-minimum spanning tree problem discussed
in Section~\ref{sec:offline:rmks}; however, it is not sufficient
to solve the maxima problem.
The alternative algorithm has the advantage that it avoids ``bit tricks'',
though it requires a nontrivial subroutine---an algorithm
of Chazelle~\cite{ChaSICOMP92} for intersecting convex polyhedra.

We first consider the problem in 3-d.  It has been observed that
techniques for halfspace range searching can often be adapted
to dominance range searching~\cite{afshani:dominance}.  We first point out
an explicit way to reduce dominance to halfspace range searching.
Surprisingly, this reduction has not appeared before to the authors'
knowledge (although
the idea behind the reduction, which is based on an exponentially
spaced grid, is commonplace).
Specifically, fix a constant $r>3$.
Assume that all the points have positive integer coordinates (we can initially
sort the coordinates once at the beginning and reduce to rank space).
Transform each red point $p=(i,j,k)$ to the point
$p^*=(r^i,r^j,r^k)$.
Transform each blue point $q=(a,b,c)$ to the halfspace
$q^*=\{(x,y,z) : x/r^a + y/r^b + z/r^c \le 3\}$.
It is easy to see that $p$ is dominated by $q$ iff $p^*$ lies inside $q^*$:
if $i\le a$, $j\le b$, and $k\le c$, then $r^i/r^a+r^j/r^b+r^k/r^c\le 3$,
but if $i>a$, $j>b$, or $k>c$, then $r^i/r^a+r^j/r^b+r^k/r^c \ge r > 3$.

Let $P$ be the convex hull of the transformed red points and $Q$ be the 
intersection of the complements of the transformed blue halfspaces.
Then the answer to our red/blue dominance problem is no iff
every transformed red point lies in the complement of every transformed
blue halfspace, i.e., $P$ lies inside $Q$, i.e., $P\cap Q = P$.
Chazelle~\cite{ChaSICOMP92} has given a linear-time algorithm
for intersecting two convex polyhedra.  Thus, the 3-d problem
can be solved in linear time, provided that the
polyhedra $P$ and $Q$ are given.

Now, to solve the 4-d problem, we build a binary range tree $\TT$ according to
the last coordinate as before and obtain a series of 3-d subproblems of total
size $O(n\lg n)$.  Observe that we can pre-compute the
red convex hulls $P$ at all the nodes of $\TT$ bottom-up in $O(n\lg n)$ time, by
repeatedly using Chazelle's linear-time algorithm for merging
two convex hulls (computing the convex hull of two convex polyhedra
is dual to intersecting two convex polyhedra).  Similarly, we can
pre-compute the blue halfspace intersections $Q$ at all the nodes
of $\TT$ in $O(n\lg n)$ time, again 
by repeatedly using Chazelle's algorithm
for intersecting two halfspace intersections.  
This immediately gives a solution to the 4-d problem
with overall running time $O(n\lg n)$.

\medskip\noindent
{\em Remarks\/}: Precision issues seem to arise since the coordinates of
the transformed points and halfspaces involve large numbers, but 
we can simulate any primitive operation on these points and halfspaces
by treating $r$ as a symbolic variable that approaches infinity.  It would 
be interesting to see if we can directly merge or intersect staircase polyhedra 
without going through the transformation and invoking Chazelle's
algorithm.  Note that the above approach does not work at all for the offline
dominance reporting problem, or for that matter, the
offline dominance emptiness problem (in the 3-d subproblem,
we do not know the answer for any non-maximal blue query point whose halfspace 
does not appear on $\partial Q$).

\end{document}